\spnewtheorem*{proposition*}{Proposition}{\bfseries}{\itshape}
\spnewtheorem*{lemma*}{Lemma}{\bfseries}{\itshape}
\spnewtheorem*{theorem*}{Theorem}{\bfseries}{\itshape}
\spnewtheorem*{corollary*}{Corollary}{\bfseries}{\itshape}
\spnewtheorem*{remark*}{Remark}{\bfseries}{\itshape}
\newcommand{\aut}[1]{\mathfrak{#1}}
\newcommand{\naturals}{\mathbbm{N}}
\newcommand{\rec}{\mathsf{Rec}}
\newcommand{\occset}{\mathrm{occ}}
\newcommand{\infset}{\mathrm{inf}}
\newcommand{\alphinf}{\mathrm{alphinf}}
\newcommand{\alphabet}{\mathrm{alph}}
\newcommand{\ext}{\mathrm{ext}}
\newcommand{\REG}{\mathsf{REG}}
\newcommand{\BC}{\mathrm{BC}}
\newcommand{\F}{\mathcal{F}}
\newcommand{\K}{\mathcal{K}}
\newcommand{\set}[1]{\{#1\}}
\newcommand{\omReg}{$\omega$-regular\xspace}
\newcommand{\mStar}{\mathbb{M}(\Sigma, I)}
\newcommand{\mOmega}{\mathbb{R}(\Sigma, I)}
\newcommand{\trace}[1]{\sim_{#1}}
\newcommand{\prefix}{\sqsubseteq}
\newcommand{\pprefix}{\sqsubset}
\newcommand{\mapping}[3]{#1\colon #2\rightarrow #3}
\newcommand{\bigO}{\boldsymbol{\mathcal{O}}}
\newcommand{\Pow}{\mathscr{P}}
\title{Weak $\omega$-Regular Trace Languages}
\author{Namit Chaturvedi \and Marcus Gelderie}
\institute{RWTH Aachen University, Lehrstuhl f{\"{u}}r Informatik 7, D-52056 Aachen
\email{\{chaturvedi,gelderie\}@automata.rwth-aachen.de}}
\begin{document}

\maketitle

\begin{abstract}
Mazurkiewicz traces describe concurrent behaviors of distributed systems. Trace-closed word languages, which are ``linearizations'' of trace languages, constitute a weaker notion of concurrency but still give us tools to investigate the latter. In this vein, our contribution is twofold. Firstly, we develop definitions that allow classification of \omReg trace languages in terms of the corresponding trace-closed \omReg word languages, capturing E-recognizable (reachability) and (deterministically) Büchi recognizable languages. Secondly, we demonstrate the first automata-theoretic result that shows the equivalence of \omReg trace-closed word languages and Boolean combinations of deterministically $I$-diamond B{\"u}chi recognizable trace-closed languages.
\end{abstract}

\section{Introduction}\label{intro}
Traces were introduced as models representing partially concurrent behaviors of distributed systems by Mazurkiewicz, who later also provided explicit definition of infinite traces \cite{Maz87}. Zielonka demonstrated the close relation between traces and words that can be viewed as ``linearizations'' of traces, and also established automata-theoretic results regarding recognizability of languages of finite traces \cite{Zie87} (alternatively, see \cite{Muk12} for an introduction). We also refer the reader to \cite{DR95} for a comprehensive collection of early results. Subsequently, Gastin-Petit \cite{GP92} and Diekert-Muscholl \cite{DM93}, respectively, demonstrated the direct correspondence between the family of recognizable languages of infinite traces (\omReg trace languages), and the families of asynchronous Büchi and deterministic asynchronous Muller automata. As with languages of finite traces, a set of infinite traces is recognizable iff the set of linearizations, i.e. the word language, corresponding to the set of infinite traces is. 

It is well known that \omReg languages can be obtained by various operations from regular languages of finite words. In general, any \omReg language $L$ can be represented as $K_1\cdot K_2^\omega$, with $K_1, K_2$ regular. Languages $L$ of this form are recognized by Muller automata. There are also notions of subclasses of \omReg languages that are obtained from given regular languages $K$ in the following ways:
\begin{itemize}
 \item $\ext(K) = \{\alpha \in \Sigma^\omega \mid \alpha \mbox{ has a prefix in } K\}$
 \item $\lim(K) = \{\alpha \in \Sigma^\omega \mid \alpha \mbox{ has infinitely many prefixes in } K\}$
\end{itemize}

For $K$ regular, languages $\lim(K)$ are referred to as deterministically Büchi recognizable languages, and the corresponding deterministic Büchi automata (DBAs) can be constructed efficiently from the minimal DFA recognizing $K$. The same is true for languages $\ext(K)$, which are recognized by $E$-automata (reachability automata). Finite Boolean combinations of languages $\ext(K)$ yield the family of \emph{weakly recognizable} languages. This class can alternatively be characterized in terms of automata, being precisely the class of languages recognizable by deterministic weak automata (DWAs). Finite Boolean combinations of languages $\lim(K)$ result in all \omReg languages. For a class $\K$ of regular languages, we refer to classes $\ext(\K), \lim(\K)$.

For both of these operations,  we define corresponding operations for recognizable languages $T$ of finite traces, $\ext(T)$ and $\lim(T)$. We show these operations relate to the classical word operations on the language $K$ of linearizations of traces in $T$. More precisely, given a language of finite traces $T$ with $K$ the language of its linearizations, we show how $K$ can be modified to a trace-closed $K_I$, such that the diagram in Fig. \ref{fig:ext_commutes} commutes. In particular, for every trace-closed $K$, $\ext(K_I)$ is trace-closed. Furthermore, for every recognizable $T$, the linearizations of $\ext(T)$ are recognizable by an $I$-diamond E-automaton. Using this, we characterize the class of languages of infinite traces whose linearizations are recognizable by $I$-diamond DWAs, as precisely the Boolean combinations of languages of the form $\ext(T)$ for recognizable languages $T$ of finite traces. In the same spirit, we consider $\lim(T)$ and $\lim(K)$. Here the situation is different, in that not for every recognizable $T$, the language of linearizations of $\lim(T)$ is recognizable by an $I$-diamond DBA. We characterize the subclass of recognizable $T$, where $\aut{A}_K$, the minimal DFA for the linearizations $K$, also recognizes the linearizations of $\lim(T)$ as a DBA. For those languages, the diagram \ref{fig:lim_commutes} commutes. In particular, for such $K$, $\lim(K)$ is trace-closed. Moreover, we show that every recognizable language of infinite traces is a finite Boolean combination of languages $\lim(T)$ for such $T$. Hence, any trace-closed language $L$ of infinite traces is a Boolean combination of $I$-diamond DBA recognizable trace-closed languages. 

\begin{figure}
\centering

\subfloat[Infinitary extensions]{\label{fig:ext_commutes}
\begin{tikzpicture}[every node/.style={inner sep =1mm, minimum width=0mm}]
\node (T) at (0,0) {$T$};
\node (K) at (0,-1) {$K$};
\node (extT) at (3,0) {$\ext(T)$};
\node (KI) at (1.3,-1) {$K_I$};
\node (extKI) at (3,-1) {$\ext(K)$};

\path[-latex] 
(T) edge (K)
    edge (extT)
(K) edge (KI)
(KI) edge (extKI)
(extT) edge (extKI);
\end{tikzpicture}
}
\hspace{.8cm}
\subfloat[Infinitary limits]{\label{fig:lim_commutes}
\begin{tikzpicture}[every node/.style={inner sep =1mm, minimum width=0mm}]
\node (T) at (0,0) {$T$};
\node (K) at (0,-1) {$K$};
\node (extT) at (3,0) {$\lim(T)$};
\node (extKI) at (3,-1) {$\lim(K)$};

\path[-latex] 
(T) edge (K)
    edge (extT)
(K) edge (extKI)
(extT) edge (extKI);
\end{tikzpicture}
}
\caption{Infinite trace-closed languages from finite trace-closed languages.}
\end{figure}
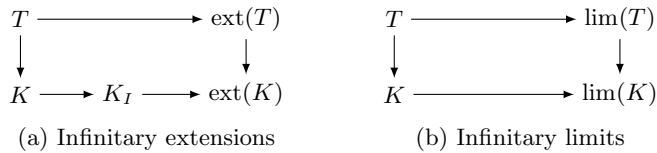

In related work, Muscholl and Diekert \cite{DM93} consider a form of ``deterministic'' trace languages. In \cite{Mus94} it is shown that every recognizable language of infinite traces is a Boolean combination of these deterministic languages. However, those languages require modifications to the B\"uchi acceptance condition in order to obtain a correspondence in terms of $I$-diamond DBAs. The problem of finding a suitable class of languages which has a classical automaton correspondence is left open in \cite{Mus94}.

We begin with presenting definitions that are relevant to the connections between regular and \omReg languages. We also formally introduce the notion of regular and \omReg trace languages. In Sec. \ref{sec:tcLanguages}, we present definitions that allow construction of various classes of \omReg trace languages from regular trace languages. In particular, we classify trace languages whose linearizations are weakly recognizable, and those whose linearizations are DBA recognizable. We establish that every \omReg trace language is a Boolean combination of those trace languages whose linearizations are DBA recognizable.

\section{Preliminaries}\label{prelim}

We denote a recognizable language of finite words, or simply a \emph{regular language}, with the upper case letter $K$ and a class of such languages with $\K$. Finite words are denoted with lower case letters $u$, $v$, $w$ etc. Infinite words are denoted by lower case Greek letters $\alpha$ and $\beta$, and a recognizable language of infinite words, or simply an \emph{\omReg language}, by upper case $L$. For a word $u$ or $\alpha$, we denote its infix starting at position $i$ and ending at position $j$ by $u[i,j]$ or $\alpha[i,j]$, and the $i^{th}$ letter with $u[i]$ or $\alpha[i]$. For a language $K$, we denote the complement language by $\overline{K}$.

We assume the reader is familiar with the notions of Deterministic Finite Automata (DFAs) and Deterministic Büchi Automata (DBAs). We say that a language is \emph{DBA recognizable} iff it is recognized by a DBA. For the class $\REG$ of regular languages, the class $\lim(\REG)$ coincides with the DBA recognizable languages. Further, the class $\BC(\lim(\REG))$ of finite Boolean combinations of languages from $\lim(\REG)$ is also the class of \omReg languages, and it coincides with the class of languages recognized by nondeterministic B\"uchi or deterministic Muller automata.

Recall that a Deterministic Weak Automaton (DWA) is a DBA where every strongly connected component of the transition graph has only accepting states or only rejecting states. For a regular language $K$, the minimal DFA recognizing $K$ also recognizes $\lim(K)$ as a DBA. Given the minimal DFA $\aut{A} = (Q, \Sigma, q_0, \delta, F)$ recognizing $K$, a DWA $\aut{A}' \coloneqq (Q', \Sigma, q_0, \delta', F')$ recognizing $\ext(K)$, respectively $\overline{\ext(K)}$, can be constructed as follows:
\begin{enumerate}
 \item For a symbol $\bot \notin Q$ and define $Q' \coloneqq (Q \setminus F) \cup \{\bot\}$.
 \item For each $q \in Q', a \in \Sigma$, define $\delta'(q, a) \coloneqq \begin{cases}
\delta(q, a) & \text{if } q \neq \bot \mbox{ and } \delta(q, a) \notin F,\\                                                                                                                                                                                                
\bot & \text{otherwise}.                                                                                                                                                                                               \end{cases}$
 \item Define $F' \coloneqq \{\bot\}$, respectively $F' \coloneqq Q' \setminus \{\bot\}$
\end{enumerate}


The family of DWAs is closed under Boolean operations. For an $\omega$-language $L$, define a congruence $\trace{L} \subseteq \Sigma^* \times \Sigma^*$ where $u\trace{L}v \Leftrightarrow \forall \alpha \in \Sigma^\omega, u\alpha \in L$ iff $v\alpha \in L$. If $L$ is recognized by a DWA then this congruence has a finite index. We say that an $\omega$-language is \emph{weakly recognizable} if it is recognized by a DWA. The class $\BC(\ext(\REG))$ of finite Boolean combinations of languages in $\ext(\REG)$ is exactly the set of weakly recognizable languages.

\begin{remark}[The minimal DWA \cite{Loe01}]\label{rem:DWA}
  If for a weakly recognizable language $L$, $M$ is the index of the congruence defined above, then the language is recognized by a DWA $\aut{A} = (Q, \Sigma, q_0, \delta, F)$ with $|Q| = M$. Also, for each state $q \in Q$ there exists a word $u_q \in \Sigma^*$ such that for each $u \in \Sigma^*, \delta(q_0, u) = q$ iff $u \in [u_q]_{\trace{L}}$.
\hfill$\boxtimes$
\end{remark}

Turning to traces, let $I\subseteq \Sigma\times \Sigma$ denote an irreflexive\footnote{A relation $R$ is \emph{irreflexive} if for no $x$ we have $xRx$.}, symmetric \emph{independence} relation over an alphabet $\Sigma$, then $D \coloneqq \Sigma^2 \setminus I$ is the reflexive, symmetric \emph{dependence} relation over $\Sigma$. We refer to the pair $(\Sigma, I)$ as the \emph{dependence alphabet}. For any letter $a \in \Sigma$, we define $I_a \coloneqq \{b \in \Sigma \mid a I b\}$ and $D_a \coloneqq \{b \in \Sigma \mid a D b\}$. A \emph{trace} can be identified with a labeled, acyclic, directed \emph{dependence graph} $[V, E, \lambda]$ where $V$ is a set of countably many vertices, $\lambda\colon V\to \Sigma$ is a labeling function, and $E$ is a countable set of edges such that, firstly, for every $v_1, v_2 \in V\colon \lambda(v_1) D \lambda(v_2) \Leftrightarrow (v_1, v_2) \in E \vee (v_2, v_1) \in E$; secondly, every vertex has only finitely many predecessors. $\mStar$ and $\mOmega$ represent the sets of all finite and infinite traces whose dependence graphs satisfy the two conditions above. We denote finite traces with the letter $t$, and an infinite trace with $\theta$; the corresponding languages with $T$ and $\Theta$ respectively. For a trace $t = [V, E, \lambda],$ define $\alphabet(t) \coloneqq \{a \in \Sigma \mid \emptyset \neq \lambda^{-1}(a) \subseteq V\}$, and similarly for a trace $\theta$. For an infinite trace, define $\alphinf(\theta) \coloneqq \{a \in \Sigma \mid |\lambda^{-1}(a)| = \infty\}$.

For two traces $t_1, t_2,\ t_1 \sqsubseteq t_2$ (or $t_1 \sqsubset t_2$) denotes that $t_1$ is a (proper) prefix of $t_2$. We denote the prefix relation between words similarly. The least upper bound of two finite traces, whenever it exists, denoted $t_1 \sqcup t_2$ is the smallest trace $s$ such that $t_1 \sqsubseteq s$ and $t_2 \sqsubseteq s$. Whenever it exists, one can similarly refer to the least upper bound $\bigsqcup S$ of a finite or an infinite set $S$ of traces. The concatenation of two traces is denoted as $t_1 \odot t_2$. Note that for any $t, \theta$ the concatenation $t\odot \theta \in \mOmega$. However, $\theta \odot t \in \mOmega$ iff $\alphinf(\theta) I \alphabet(t)$.

The canonical morphism $\Gamma\colon \Sigma^* \to \mStar$ associates finite words with finite traces, and the inverse mapping $\Gamma^{-1}\colon \mStar \to 2^{\Sigma^*}$ associates finite traces with equivalence classes of words. The morphism $\Gamma$ can also be extended to a mapping $\Gamma\colon \Sigma^\omega \to \mOmega$. For a (finite or infinite) trace $t$, the set $\Gamma^{-1}(t)$ represents the \emph{linearizations} of $t$. Two words $u, v$ are equivalent, denoted $u \trace{I} v$, iff $\Gamma(u) = \Gamma(v)$. We note that for finite traces the relation $\trace{I}$ coincides with the reflexive, transitive closure of the relation $\set{(uabv,ubav)\mid u,v\in\Sigma^*\wedge aIb}$. For a word $w$, define the set $[w]_{\trace{I}} \coloneqq \Gamma^{-1}(\Gamma(w))$. Finally, we say that a word language $K$ is \emph{trace-closed} iff $K = [K]_{\trace{I}}$, where $[K]_{\trace{I}} \coloneqq \bigcup_{u \in K} [u]_{\trace{I}}$. 

\begin{definition}
\label{def:recognizable}
 A trace language $T \subseteq \mStar$ (resp. $\Theta \subseteq \mOmega$) is \emph{recognizable} or \emph{regular} iff $\Gamma^{-1}(T)$ (resp. $\Gamma^{-1}(\Theta)$) is a recognizable word language.
\end{definition}
\noindent With $\rec(\mStar)$ and $\rec(\mOmega)$ we denote the classes of recognizable languages of finite and infinite traces respectively. 

Asynchronous cellular automata have been introduced \cite{DM93,GP92} as acceptors of \omReg trace languages. However, a global view of their (local) transition relations yields a notion of automata that recognize trace-closed word languages. Throughout this paper, we take this global view of asynchronous automata. Formally, a \emph{deterministic asynchronous cellular automaton (DACA)} over $(\Sigma,I)$ is a 4-tuple $\aut{a}=(\prod_{a\in\Sigma} Q_a,(\delta_a)_{a\in\Sigma},q_0,F)$, where $q_0\in \prod_{a\in\Sigma} Q_a$,  $\mapping{\delta_a}{\prod_{b\in D_a} Q_b}{Q_a}$ and $F\subseteq \prod_{a\in\Sigma}Q_a$. Given a state $q\in\prod_{a\in\Sigma} Q_a$ and a letter $b\in\Sigma$, the unique $b$-sucessor $\delta(q,b)= q' = (q'_a)_{a \in \Sigma} \in \prod_{a\in\Sigma} Q_a$  is given by $q'_b=\delta_b((q_a)_{a\in D_b})$ and $q'_a=q_a$ for all $a\neq b$. That is, the only component that changes its state is the component corresponding to $b$. Given a word $u\in\Sigma^*$ the \emph{run} $\rho_u$ of $\aut{a}$ on $u$ is given as usual by $\rho_u(0)=q_0$ and $\rho_u(i+1)=\delta(\rho_u(i),u[i])$. This definition extends naturally to infinite runs $\rho_{\alpha}$ on infinite $\alpha\in\Sigma^\omega$. A \emph{deterministic asynchronous Muller automaton (DACMA)} is an asynchronous automaton $\aut{a}=(\prod_{a\in\Sigma} Q_a,(\delta_a)_{a\in\Sigma},q_0,\F)$ with $\F\subseteq \prod_{a\in\Sigma}\Pow(Q_a)$. We define $\occset_a(\rho)$ of (a finite or an infinite) run $\rho$ to be the set $\set{\rho(0)_a,\rho(1)_a,\ldots}\subseteq Q_a$. Likewise, $\infset_a(\rho)=\set{q\in Q_a\mid \exists^\infty n\colon \rho(n)_a=q}$. A DACMA \emph{accepts} $\alpha\in\Sigma^\omega$ if for some $F=(F_a)_{a\in\Sigma}\in\F$ we have $\infset_a(\rho_{\alpha})=F_a$. 

 A word automaton $\aut{A}=(Q,\Sigma,q_0,\delta)$ is called \emph{$I$-diamond} if for every $(a,b) \in I$ and every state $q \in Q$, $\delta(q, ab) = \delta(q, ba)$. Every $T \in \rec(\mStar)$ (resp. $\Theta \in \rec(\mOmega)$) is recognized by a DACA \cite{DR95} (resp. a DACMA \cite{DM93}). Via their global behaviors, they accept the corresponding trace-closed languages, and in particular, every regular trace-closed language (resp. trace-closed \omReg language) is recognized by an $I$-diamond DFA (resp. $I$-diamond Muller automaton). In fact for every trace-closed $K \in \REG$, the minimal DFA $\aut{A}_K$ accepting $K$ is $I$-diamond.


Finally, we want to recall some basic algebraic definitions. Given a language $T$ of finite traces, a semigroup  $S$, and a morphism $\mapping{\varphi}{\mStar}{S}$, $\varphi$ is said to \emph{recognize} $T$ if there exists $P\subseteq S$ with $T=\varphi^{-1}(P)$. By extension, $S$ is said to recognize $T$ if such a morphism exists. A \emph{linked  pair} of a semigroup is a tuple $(s,e)\in S^2$ with $s\cdot e = s$ and $e\cdot e=e$. We state a well known consequence of Ramsey's theorem: Let $A$ be a (possibly infinite) alphabet, $S$ be any finite semigroup and  $\mapping{f}{A^+}{S}$ any mapping. Given an infinite sequence $\alpha\in A^\omega$ and an arbitrary factorization $\alpha=(u_i)_i$ of $\alpha$ into words $u_i\in A^+$, there exists a linked pair $(s,e)$ and a strictly monotone sequence $(n_i)_i$ of natural numbers with the property that $f(u_0\cdots u_{n_0})=s$ and $f(u_{n_i} \cdots u_{n_{i+1}-1})=e$ for all $i\in\naturals$. Let $(u'_i)_i$ be given by $u'_0=u_0\cdots u_{n_0}$ and $u_i'=u_{n_i}\cdots u_{n_{i+1}-1}$ for $i\geq 1$. We say this \emph{superfactorization}  is \emph{associated} with $(s,e)$. We will often use Ramsey's theorem implicitly. Given a semigroup $S$, a morphism $\mapping{\varphi}{\mStar}{S}$ is said to \emph{saturate} $\Theta\subseteq \mOmega$ if for every linked pair $(s,e)$ of $S$ we have either $\varphi^{-1}(s)\odot (\varphi^{-1}(e))^\omega\cap\Theta=\emptyset$ or $\varphi^{-1}(s)\odot (\varphi^{-1}(e))^\omega\subseteq \Theta$. Let  $\Theta$ be a language of infinite traces, $S$ be a finite semigroup, and  $\mapping{\varphi}{\mStar}{S}$ a saturating morphism. Then $\varphi$ \emph{recognizes} $\Theta$, if for some set $P$ of linked pairs of $S$ we have $\Theta=\bigcup_{(s,e)\in P}\varphi^{-1}(s)\odot (\varphi^{-1}(e))^\omega$. Again, we say $S$ recognizes $\Theta$ if such a morphism exists. These notions of recognizability coincide with the corresponding notions from Def.~\ref{def:recognizable}.
\section{From Regular Trace Languages to \boldmath{$\omega$}-Regular Trace Languages}\label{sec:tcLanguages}
We wish to extend the well-studied relations between regular and \omReg languages to the field of finite and infinite traces. We first look at reachability and safety languages, their Boolean combinations, i.e. the weakly recognizable languages, and study how they can be obtained as a result of infinitary operations on regular trace languages. We will later see that the case of Büchi recognizability is not straight forward. Our definitions are consistent with those over word languages; that is, if the dependence relation over the alphabet is complete then these definitions coincide.

\subsection{Infinitary Extensions of Regular Trace Languages}\label{subsec:wLanguages}

In the classification hierarchy of \omReg languages, reachability and safety languages occupy the lowest levels. For trace languages we have the following. 

\begin{definition}\label{def:extT}
 Let $T \in \rec(\mStar)$. The \emph{infinitary extension} is the $\omega$-trace language given by $\ext(T) \coloneqq \bigcup_{t \in T} t\odot \mOmega$. 
\end{definition}

However, the definition of infinitary extensions of a trace-closed languages is not sound with respect to trace equivalence of $\omega$-words; i.e. if $T \in \rec(\mStar)$ and $K = \Gamma^{-1}(T)$, then, in general, $\ext(K) \neq \Gamma^{-1}(\ext(T))$. 

\begin{example}\label{ex:Ext}
 Let $\Sigma = \{a, b, c\}$, and $b I c$. Define $K \coloneqq [ab]_{\trace{I}}$. Clearly $K$ is trace-closed and, moreover, $acb \notin K$. Let $T = \Gamma(K)$. Clearly $abc^\omega, acbc^\omega, accbc^\omega,\dots$ are equivalent words since they induce the same infinite trace which belongs to $\ext(T)$. However, while $abc^\omega \in \ext(K)$, $ac^+bc^\omega \nsubseteq \ext(K)$. 
\hfill $\boxtimes$
\end{example}

\begin{definition}\label{def:iSuffExt}
 Let $K \subseteq \Sigma^*$ be trace-closed. Define the \emph{$I$-suffix extended} trace-closed language (or \emph{$I$-suffix extension}) of $K$ as $ K_I \coloneqq K \cup \bigcup_{a \in \Sigma} [Ka^{-1}aI_a^*]_{\trace{I}}$.
\end{definition}

Due to the closure of $\rec(\mStar)$ under concatenation and finite union \cite{DR95}, we know that $K_I$ is regular whenever $K$ is regular.


\begin{proposition}\label{prop:extIsuff}
 If $T \in \rec(\mStar)$, $K = \Gamma^{-1}(T)$, and $K_I$ is the $I$-suffix extension of $K$, then $\Gamma^{-1}(\ext(T)) = \ext(K_I)$.
\end{proposition}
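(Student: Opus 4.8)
The plan is to prove the two inclusions $\Gamma^{-1}(\ext(T)) \subseteq \ext(K_I)$ and $\ext(K_I) \subseteq \Gamma^{-1}(\ext(T))$ separately, working throughout with the characterization of trace equivalence on finite words as the reflexive-transitive closure of adjacent transpositions of independent letters. The key conceptual point is the one exposed by Example~\ref{ex:Ext}: a linearization $\alpha$ of an infinite trace $\theta \in \ext(T)$ need not itself have a prefix in $K$, because letters that $\theta$ schedules ``early'' (namely a minimal element $t \in T$ with $t \sqsubseteq \theta$) may appear arbitrarily late in $\alpha$, provided every letter overtaking them is independent of the letter being delayed. The definition of $K_I$ exactly accounts for one such delayed letter at a time, and iterating the $I$-suffix operation is unnecessary because the suffixes $a I_a^*$ inserted are words over letters independent of $a$, so applying the construction once to each $a\in\Sigma$ already closes under all finite rearrangements of a prefix of $t$ against an independent ``tail''.

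For the inclusion $\ext(K_I) \subseteq \Gamma^{-1}(\ext(T))$: take $\alpha \in \ext(K_I)$, so $\alpha$ has a prefix $w \in K_I$. If $w \in K$ then $\Gamma(w) \in T$ and $\Gamma(w) \sqsubseteq \Gamma(\alpha)$, so $\Gamma(\alpha) \in \ext(T)$. Otherwise $w \in [K a^{-1} a I_a^*]_{\trace{I}}$ for some $a$, i.e. $w \trace{I} u a x$ where $ua \in K$ (so $\Gamma(ua) \in T$) and $x \in I_a^*$. I would argue that $\Gamma(ua) \sqsubseteq \Gamma(w) = \Gamma(u a x)$: since every letter of $x$ is independent of $a$ and also the word $u a x$ contains $ua$ as a subword obtained by deleting the suffix $x$, and because $\alphabet(x) \subseteq I_a$, the trace $\Gamma(uax)$ factors as $\Gamma(u)\odot\Gamma(a)\odot\Gamma(x)$ — one should check this is well-defined, but it is exactly a concatenation — hence $\Gamma(ua)\sqsubseteq\Gamma(uax)\sqsubseteq\Gamma(\alpha)$ and again $\Gamma(\alpha)\in\ext(T)$. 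Since $\ext(T)=\Gamma\Gamma^{-1}(\ext(T))$ is trace-closed as a set of infinite traces, this gives $\alpha\in\Gamma^{-1}(\ext(T))$.

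For the converse, $\Gamma^{-1}(\ext(T)) \subseteq \ext(K_I)$: take $\alpha$ with $\Gamma(\alpha) \in \ext(T)$, so there is $t \in T$ with $t \sqsubseteq \Gamma(\alpha)$; pick $t$ minimal, and let $u\in K$ be a linearization of $t$, say $u = a_1\cdots a_n$. The vertices of $t$ sit inside $\Gamma(\alpha)$, so there is a finite prefix $\alpha[1,m]$ of $\alpha$ whose trace contains $t$; write $\Gamma(\alpha[1,m]) = t \odot s$ for some finite trace $s$. Now I want to rearrange a linearization of $\Gamma(\alpha[1,m])$ into the form $($something in $K a^{-1} a I_a^*)$, but one pass only handles one letter; the honest statement is that $\Gamma(\alpha[1,m])$ itself has a linearization $v a_n y$ with $u' a_n\in K$ for a linearization $u'$ of $t$ and $\alphabet(y)\subseteq I_{a_n}$ precisely when all of $s$ is independent of $a_n$, which need not hold. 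So the real argument is: because $t\sqsubseteq\Gamma(\alpha)$, the word $\alpha$ has \emph{some} prefix $\alpha[1,m]$ that is $\trace{I}$-equivalent to a word of the form $u z$ with $\alphabet(z)$ arbitrary; we then only need that $\alpha$ has a prefix in $K_I$, and I claim $\alpha[1,m]\in[K]_{\trace{I}}\cdot\Sigma^*$ is already too weak — instead I would take $m$ to be exactly the position of the last vertex of $t$ inside $\Gamma(\alpha)$, so that $\Gamma(\alpha[1,m])$ has $t$ as a prefix and the ``overhang'' $s$ consists only of letters scheduled before that position that are \emph{not} below the last vertex of $t$, hence independent of $\lambda$ of that last vertex; choosing the last vertex's letter as $a$ then yields a linearization $u' a y$ of $\Gamma(\alpha[1,m])$ with $u'a\trace{I} u\in K$ and $\alphabet(y)\subseteq I_a$, so $\alpha[1,m]\in[Ka^{-1}aI_a^*]_{\trace{I}}\subseteq K_I$.

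The main obstacle is exactly this bookkeeping in the converse direction: making precise ``the overhang is independent of the last letter of $t$'' and why a single application of the $I$-suffix operation suffices. The clean way is to induct on $|s|$ (the size of the overhang): if $s=\emptyset$ then $\alpha[1,m]\trace{I}u\in K$; otherwise pick a maximal vertex $x$ of $t\odot s$ — it lies in $s$ (since we may choose $m$ so that no $t$-vertex is maximal, or handle that case by shrinking $m$) — with label $a$, note that removing $x$ leaves a trace whose prefix is still $t$ (as $x\notin t$) and whose overhang is $s$ minus a point, and that $a I \alphabet(s\setminus x)$ need not hold, so a plain induction does not close. The resolution used in the proof is presumably to choose $m$ minimal with $t\sqsubseteq\Gamma(\alpha[1,m])$, which forces $\alpha[m]$ to be the label of a maximal vertex of $t$ itself, say $a=a_n$ after reordering $u$; then $\Gamma(\alpha[1,m-1])\sqcup t' = \Gamma(\alpha[1,m-1])$ where $t'=t$ minus that maximal vertex, so $t'\sqsubseteq\Gamma(\alpha[1,m-1])$ with overhang $s'$, and $s'$ consists only of letters independent of $a_n$ (anything dependent on $a_n$ appearing before position $m$ would have to lie below $x$, contradicting maximality of $x$ in $t$ combined with minimality of $m$). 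Thus a linearization of $\alpha[1,m]$ has the shape $w' \, a_n \, y$ with $w' a_n\trace{I}u\in K$ and $\alphabet(y)\subseteq I_{a_n}$, giving $\alpha[1,m]\in[Ka_n^{-1}a_nI_{a_n}^*]_{\trace{I}}\subseteq K_I$ and hence $\alpha\in\ext(K_I)$. Finally I would note regularity of $K_I$ is already granted in the text, so nothing further is needed there.
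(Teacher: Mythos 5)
Your final argument is correct and is essentially the paper's own: the paper likewise proves $\ext(K_I)\subseteq\Gamma^{-1}(\ext(T))$ directly from the definitions, and for the converse uses exactly your key observation that choosing the minimal position $i$ with $t\sqsubseteq\Gamma(\beta[1,i])$ forces $\beta[i]$ to label a maximal vertex of $t$ and the overhang $s$ to satisfy $\alphabet(s)\subseteq I_{\beta[i]}$, whence $\beta[1,i]\in K_I$. The only (immaterial) structural difference is that the paper splits the converse into ``some linearization lies in $\ext(K_I)$'' plus ``$\ext(K_I)$ is trace-closed,'' whereas you apply the same combinatorial step directly to an arbitrary linearization.
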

\begin{proof}
From the definitions of $K_I$ and $T$, we trivially observe that for every $\alpha \in \ext(K_I)$ it holds that $\Gamma(\alpha) \in \ext(T)$. Therefore, $\ext(K_I) \subseteq \Gamma^{-1}(\ext(T))$.

To show $\Gamma^{-1}(\ext(T)) \subseteq  \ext(K_I)$, we show that: (1) for every infinite trace in $\ext(T)$, there exists a linearization in $\ext(K_I)$; (2) the language $\ext(K_I)$ is trace-closed. 

\noindent (1) Consider $\theta \in \ext(T)$. Hence there exist $t \in T$ and $\theta' \in \mOmega$ such that $\theta = t\odot\theta'$. From the definitions, it follows that for any $w \in \Gamma^{-1}(t)$ and $\beta \in \Gamma^{-1}(\theta'), w\cdot\beta \in \ext(K)$ and therefore in $\ext(K_I)$.

\noindent (2) Let $\alpha \in \ext(K_I)$, and $t \in T$ be a trace such that $t \sqsubset \Gamma(\alpha)$. Consider any $\beta \in \Sigma^\omega$ such that $\beta \trace{I} \alpha$. Trace equivalence implies that $t \sqsubset \Gamma(\beta)$. Moreover there exists a minimal natural number $i \in \naturals, t \sqsubseteq \Gamma(\beta[1,i])$. Observe that $\beta[i]$ is a maximal symbol appearing in $t$ because otherwise we can contradict the minimality of $i$ and find $i' < i$ such that $t \sqsubseteq \Gamma(\beta[1,i'])$. Now, let $s \in \mStar$ be the finite trace such that  $t\odot s = \Gamma(\beta[1,i])$. 

It must hold that either $s$ is the empty trace or $\beta[i]\times \alphabet(s) \subseteq I$, because otherwise $t\odot s \neq \Gamma(\beta[1,i])$. This implies $\beta[1,i] \in K_I$, and hence $\beta \in \ext(K_I)$.
\end{proof}

\begin{remark}
In general $K_I\neq (K_I)_I$. However,  iterated $I$-suffix extensions preserve the infinitary extension languages: $\ext(K) \subseteq \ext(K_I) = \ext((K_I)_I) \dots$ 
\hfill$\boxtimes$ 
\end{remark}

Proposition \ref{prop:extIsuff} provides us the basis for generating the class of weakly recognizable trace-closed languages corresponding to the recognizable subset of $\BC(\ext(\mStar))$. Henceforth, whenever we speak of the language $\Gamma^{-1}(\ext(T))$ we refer to $\ext(\Gamma^{-1}(T)_I)$. Similarly, for a trace-closed language $K$ we always mean $\ext(K_I)$ whenever we say $\ext(K)$.

\begin{theorem}\label{thm:DWA}
 A trace-closed language $L \subseteq \Sigma^\omega$ is recognized by an $I$-diamond DWA if and only if $L \in \BC(\ext(\K))$ for a finite set $\K \subseteq 2^{\Sigma^*}$ of trace-closed regular languages.
\end{theorem}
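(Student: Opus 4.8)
The statement is an equivalence, so I would prove the two directions separately. Before doing so, recall the key background facts to be combined: every trace-closed regular language $K$ has an $I$-diamond minimal DFA (stated at the end of Section \ref{prelim}); the DWA construction for $\ext(K)$ and $\overline{\ext(K)}$ from the minimal DFA (given explicitly in the preliminaries) preserves the $I$-diamond property since it only redirects transitions into a fresh sink $\bot$; DWAs are closed under Boolean operations; and, crucially, Proposition \ref{prop:extIsuff} together with the ensuing remark, which identifies $\Gamma^{-1}(\ext(T))$ with $\ext(K_I)$ and guarantees $K_I$ is trace-closed and regular whenever $K$ is. I would also invoke Remark \ref{rem:DWA} (the minimal DWA) for the forward direction.

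\textbf{Direction ($\Leftarrow$): $\BC(\ext(\K))$ for trace-closed regular $\K$ gives an $I$-diamond DWA.} Let $L \in \BC(\ext(\K))$ with $\K = \{K_1,\dots,K_n\}$ all trace-closed and regular. Here $\ext(K_i)$ is shorthand for $\ext((K_i)_I)$ by the convention fixed after Proposition \ref{prop:extIsuff}. Since each $(K_i)_I$ is again trace-closed and regular, its minimal DFA $\aut{A}_{(K_i)_I}$ is $I$-diamond, and the preliminaries' construction yields an $I$-diamond DWA for $\ext((K_i)_I)$. The $I$-diamond property is preserved by the product construction underlying Boolean combinations of DWAs: in a product automaton, $\delta((q^1,\dots,q^n),ab)$ acts componentwise, and $aIb$ gives $\delta_i(q^i,ab)=\delta_i(q^i,ba)$ in each component, hence the product is $I$-diamond. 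Relabeling the accepting set of the product according to the Boolean formula does not change the transition graph, so the resulting DWA recognizing $L$ is still $I$-diamond. This direction is essentially bookkeeping; no real obstacle.

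\textbf{Direction ($\Rightarrow$): an $I$-diamond DWA $L$ is a Boolean combination of $\ext(K_i)$ with $K_i$ trace-closed regular.} Since $L$ is trace-closed and DWA-recognizable, it is weakly recognizable as a word language, hence $L \in \BC(\ext(\REG))$, i.e. $L = \Phi(\ext(P_1),\dots,\ext(P_m))$ for some regular $P_j \subseteq \Sigma^*$ and Boolean formula $\Phi$. The difficulty is that the $P_j$ need not be trace-closed, and blindly passing to trace closures can change the $\ext$ languages. The plan is to use the canonical/minimal DWA for $L$. By Remark \ref{rem:DWA}, $L$ is recognized by a DWA $\aut{A}=(Q,\Sigma,q_0,\delta,F)$ of size $M = $ index of $\trace{L}$, in which each state $q$ is reached by exactly the words in one congruence class $[u_q]_{\trace{L}}$; since $L$ is trace-closed, $\trace{I}$ refines $\trace{L}$, so each such class is trace-closed, and therefore $\aut A$ is $I$-diamond. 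Decompose $\aut A$ along its SCC structure: each SCC is uniformly accepting or rejecting. For each state $q$, let $K_q$ be the trace-closed regular language $\{u : \delta(q_0,u) = q\} = [u_q]_{\trace{I}}$-saturated set of words reaching $q$; more usefully, for each SCC $C$ let $R_C = \{u : \delta(q_0,u)\in C\}$ and $\ext$-type reachability languages $\mathrm{Reach}_{\geq C}$ of reaching $C$ or something below it. The standard SCC analysis of a weak automaton expresses membership $\alpha \in L$ as a Boolean condition on which SCCs $\alpha$'s run eventually stays in, and "eventually stays in the cone above SCC $C$" is itself $\overline{\ext(\cdot)}$ of the trace-closed regular language of finite words whose run has already left that cone. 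Concretely: set $K_C := \{u \in \Sigma^* : \delta(q_0,u) \notin (\text{states reachable from } C)\}$; each $K_C$ is trace-closed (preimage under an $I$-diamond DFA of a set of states) and regular, and $\overline{\ext(K_C)}$ is exactly the set of $\alpha$ whose run stays forever within the cone above $C$. A finite Boolean combination of the $\overline{\ext(K_C)}$ over all accepting SCCs $C$, expressing "the run's limit SCC is accepting," equals $L$. Using $\overline{\ext(K_C)} = \Sigma^\omega \setminus \ext(K_C)$ this is a Boolean combination of $\ext(K_C)$, and $\ext(K_C)$ and $\ext((K_C)_I)$ coincide on the level of word languages up to the convention already adopted — in fact $K_C$ is already trace-closed, so $\ext(K_C)$ is trace-closed by Proposition \ref{prop:extIsuff} with $(K_C)_I$, and $\ext((K_C)_I) = \ext(K_C)$ since $\overline{\ext(K_C)}$ is already trace-closed. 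Collecting the $K_C$ into $\K$ finishes the proof.

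\textbf{Main obstacle.} The crux is the ($\Rightarrow$) direction: extracting from an arbitrary weak Boolean representation a representation whose constituent regular languages are trace-closed. The resolution is to not start from an arbitrary representation but to read the $K_i$ directly off the minimal (equivalently, any $I$-diamond) DWA via its SCC decomposition, exploiting that in an $I$-diamond DFA the preimage of any set of states is automatically trace-closed, and that "eventually remaining above an SCC" is the complement of an $\ext$ of such a preimage. Verifying that the Boolean combination of these cone-conditions indeed equals $L$ is the routine part of the weak-automaton analysis; ensuring $I$-diamond-ness survives all the automaton constructions (the $\ext$-DWA construction and the Boolean product) is the other point requiring (easy) care.
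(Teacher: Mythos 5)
Your proposal follows essentially the same route as the paper: the ($\Leftarrow$) direction via an $I$-diamond product of the $\ext$-DWAs with the accepting set read off the Boolean formula, and the ($\Rightarrow$) direction via the observation that $\trace{I}$ refines $\trace{L}$ (so the minimal DWA is $I$-diamond) followed by an SCC decomposition whose constituent reachability languages are trace-closed preimages of state sets. The only deviations are minor: the paper takes $K_S$ to be the preimage of each SCC $S$ itself rather than of the complement of its upward cone, and your parenthetical ``states reachable from $C$'' should read ``states from which $C$ is reachable'' to match your (otherwise correct) description of the cone condition.
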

\begin{proof}
Given trace-closed regular languages $K \in \K$, we construct $I$-diamond DWA $\aut{A}_K$ accepting $\ext(K)$ as mentioned previously. Let $L \coloneqq \bigcup_{i} (\bigcap_{j} L_{i,j})$ be the language expressed in disjunctive normal form over $\ext(\K)$ (for each $i, j,\ L_{i,j}$ is either of the form $\ext(K)$ or $\overline{\ext(K)}$). We define the product DWA $\aut{A} \coloneqq (\prod_{K \in \K} Q_K, \Sigma, (q_0^K)_{K \in \K}, \delta, F)$ where:
\begin{itemize}
 \item $\delta((p^K)_{K\in\K}, a) = (q^K)_{K\in\K}$ if and only if $\delta_K(p^K, a) = q^K$ for all $K \in \K$
 \item The tuple $(q^K)_{K\in\K} \in F$ if and only if it satisfies some conjunct. That is, for some $i$ it holds that whenever $L_{i,j} = \ext(K)$ then $q^K = \bot_K$, and whenever $L_{i,j} = \overline{\ext(K)}$ then $q^K \neq \bot_K$ for all $K \in \K$. 
\end{itemize}
It is easily verified that $\aut{A}$ is an $I$-diamond DWA accepting $L$.

For the other direction, consider the minimal DWA $\aut{A} = (Q, \Sigma, q_0, \delta, F)$ that accepts $L$. Since trace equivalence $\trace{I}$ over finite words is a finer congruence than the language congruence $\trace{L}$ (i.e. $u \trace{I} v \Rightarrow u \trace{L} v$ for all $u, v \in \Sigma^*$), it follows that for any pair of finite trace equivalent words $u, v \in \Sigma^*, \delta(q_0, u) = \delta(q_0, v)$. Thus, $\aut{A}$ is $I$-diamond. 

For each SCC $S \subseteq Q$ of $\aut{A}$, let $K_S \in REG$ trace-closed be the language accepted by $\aut{A}_S \coloneqq (Q, \Sigma, q_0, \delta, S)$. Recall that each SCC of a DWA contains either only accepting states or rejecting states. Then, the language $L$ accepted by $\aut{A}$ is given by the following disjunction over all accepting SCC's $L \coloneqq  \bigcup_{S} L_S$, where $L_S \coloneqq \ext(K_S) \cap \bigcap_{S' \neq S} \overline{\ext(K_{S'})}$.
\end{proof}
\subsection{Infinitary Limits of Regular Trace Languages} \label{subsec:sLanguages}

We now consider the \emph{infinitary limit} operator. In the case of word languages, this operator extends regular languages to the family \omReg languages that are DBA recognizable. In particular, we seek an effective characterization of languages $T \in \rec(\mStar)$, such that $\Gamma^{-1}(\lim(T))$ is recognized by an $I$-diamond DBA.

\begin{definition}\label{def:limT}
 Let $T \in \rec(\mStar)$, the \emph{infinitary limit} $\lim(T)$ is the $\omega$-trace language containing all  $\theta \in \mOmega$ such that there exists a sequence $(t_i)_{i \in \naturals}, t_i \in T$ satisfying $t_i \sqsubset t_{i+1}$ and $ \bigsqcup_{i \in \naturals} t_i = \theta$.
\end{definition}

\begin{remark}
For $T \in \rec(\mStar)$, it holds that $\lim(T) \in \rec(\mOmega)$. In fact, if for a finite semigroup $S$, a morphism $\varphi \colon \mStar \to S$ recognizes $T$, then $\lim(T)$ can be described in terms of a set $P_T$ of linked pairs of $S$, i.e. $\lim(T) = \bigcup_{(s,e) \in P_T}\varphi^{-1}(s)\odot (\varphi^{-1}(e))^\omega$.
\hfill$\boxtimes$
\end{remark}

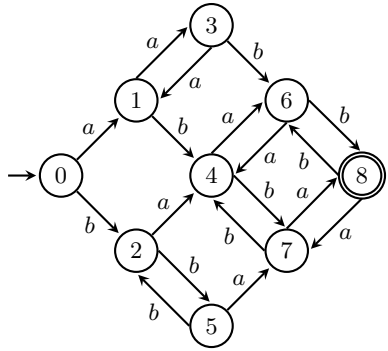
\begin{figure}[ht]
\centering
 	\begin{tikzpicture}[node distance=10mm, shorten >=1pt, thick, >=stealth, every state/.style={minimum size=3mm}, initial text=]
	  \node[state, initial] (q) at (0,0)  {\small{$0$}};
	  \node[state] (p1) at (1,1) {\small{$1$}};
	  \node[state] (p2) at (1,-1) {\small{$2$}};
	  \node[state] (p3) at (2,2) {\small{$3$}};
	  \node[state] (p4) at (2,0) {\small{$4$}};
	  \node[state] (p5) at (2,-2) {\small{$5$}};
	  \node[state] (p6) at (3,1) {\small{$6$}};
	  \node[state] (p7) at (3,-1) {\small{$7$}};
	  \node[state, accepting] (p8) at (4,0) {\small{$8$}};

	  \path[->]  
			(q) edge node [auto, inner sep=1pt]  {\small{$a$}} (p1)
			(q) edge node [auto, swap, inner sep=1pt]  {\small{$b$}} (p2)
			
			(p1.north) edge node[auto, inner sep=1pt] {\small{$a$}} (p3.west)
			(p1) edge node[auto, inner sep=1pt] {\small{$b$}} (p4)

			(p2.east) edge node[auto, inner sep=1pt] {\small{$b$}} (p5.north)
			(p2) edge node[auto, inner sep=1pt] {\small{$a$}} (p4)

			(p3.south) edge node[auto, inner sep=1pt] {\small{$a$}} (p1.east)
			(p3) edge node[auto, inner sep=1pt] {\small{$b$}} (p6)

			(p4.east) edge node[auto, inner sep=1pt] {\small{$b$}} (p7.north)
			(p4.north) edge node[auto, inner sep=1pt] {\small{$a$}} (p6.west)

			(p5.west) edge node[auto, inner sep=1pt] {\small{$b$}} (p2.south)
			(p5) edge node[auto, inner sep=1pt] {\small{$a$}} (p7)

			(p6.south) edge node[auto, inner sep=1pt] {\small{$a$}} (p4.east)
			(p6.east) edge node[auto, inner sep=1pt] {\small{$b$}} (p8.north)

			(p7.west) edge node[auto, inner sep=1pt] {\small{$b$}} (p4.south)
			(p7.north) edge node[auto, inner sep=1pt] {\small{$a$}} (p8.west)

			(p8.south) edge node[auto, inner sep=1pt] {\small{$a$}} (p7.east)
			(p8.west) edge node[auto, inner sep=1pt] {\small{$b$}} (p6.south);
	\end{tikzpicture}   
%
%
%
%
\caption{The minimal DFA recognizing language $K$ of Example \ref{ex:noLim}.}
\label{fig:noLim}
\end{figure}

\begin{example}\label{ex:noLim}
 Let $\Sigma = \{a, b\}$, and $I = \{(a, b), (b, a)\}$. Define $K \coloneqq [(aa)^+(bb)^+]_{\trace{I}}$ as the trace-closed language with even number of occurrences of $a$'s and $b$'s. The minimal DFA accepting this language is shown in Figure \ref{fig:noLim}. If $T = \Gamma(K)$, then $\lim(T)$ is defined as 
  \[\Theta = \Bigg{\{}\theta \in \mOmega 
      \left| \begin{array}{l}
       |\theta|_a \mbox{ even, } |\theta|_b = \infty \mbox{, or}\\
       |\theta|_a = \infty, |\theta|_b \mbox{ even, or}\\ 
       |\theta|_a = |\theta|_b = \infty
      \end{array} \right\}\]

The trace-closed language $L = \Gamma^{-1}(\Theta)$ consists of all infinite words $\alpha \in \Sigma^\omega$ that satisfy the same conditions as $\theta \in \Theta$ above.
\hfill $\boxtimes$
\end{example}

It is easy to verify that the DFA of Figure \ref{fig:noLim} does not accept $L$  when equipped with a Büchi acceptance condition. For instance, the automaton can loop forever in states $4$, $6$, and $7$, thereby witnessing infinitely many $a$'s and $b$'s, without ever visiting state $8$.

\begin{proposition}\label{prop:noLim}
 There does not exist any $I$-diamond DBA recognizing $L \subseteq \Sigma^\omega$ as described in Example \ref{ex:noLim}.
\end{proposition}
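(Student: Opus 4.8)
The plan is to derive a contradiction from the assumption that some $I$-diamond DBA $\aut{A}=(Q,\Sigma,q_0,\delta,F)$ recognizes $L$. The key structural fact I want to exploit is that, because $\aut{A}$ is $I$-diamond, for any state $q$ and any $n,m\in\naturals$ we have $\delta(q,a^n b^m)=\delta(q,b^m a^n)$, and more generally $\delta(q,w)$ depends only on $(|w|_a,|w|_b)$. So the reachable part of $\aut{A}$ is essentially a finite automaton reading the commutative image $\naturals^2$; write $q_{n,m}:=\delta(q_0,a^n b^m)$. By finiteness of $Q$, along the sequence $q_{n,0}$ ($n\in\naturals$) there are indices $n_1<n_2$ with $q_{n_1,0}=q_{n_2,0}$; set $p:=n_2-n_1>0$, so reading $a^{2p}$ from $q_{n_1,0}$ returns to $q_{n_1,0}$ (I take an even period to stay inside the ``even number of $a$'s'' stratum if needed). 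Similarly there is a period $p$ (after passing to a common multiple) such that reading $b^{2p}$ from suitable $b$-columns loops.

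First I would pin down two words that must be accepted and two facts about runs. Since $\alpha_1:=a^{2p}(b^{2p})^\omega$ has $|\alpha_1|_a$ even and $|\alpha_1|_b=\infty$, it lies in $L$, so its run visits $F$ infinitely often; as this run eventually cycles on the $b$-loop at some state, that $b$-cycle contains an accepting state. Symmetrically, from a word with finitely (evenly) many $b$'s and infinitely many $a$'s, some $a$-cycle reachable in $\aut{A}$ contains an accepting state. The crux, though, is the third type of word in $\Theta$: every $\alpha$ with $|\alpha|_a=|\alpha|_b=\infty$ is in $L$ \emph{regardless of parities}, whereas every $\alpha$ with, say, $|\alpha|_a=\infty$ but $|\alpha|_b$ finite and \emph{odd} is \emph{not} in $L$. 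This tension is what should break the automaton: I want to build an $\alpha$ that is rejected but whose run looks, cofinally, exactly like that of an accepted word.

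The main construction: using the $I$-diamond property, choose the $a$-period and $b$-period to be a common value $p$, and locate a ``recurrent'' state pattern. Concretely, consider the finite set of states; by a pigeonhole/Ramsey argument on the two-dimensional grid $q_{n,m}$ I would find an index pair $(N,M)$ and a period $p$ such that $q_{N,M}=q_{N+p,M}=q_{N,M+p}=q_{N+p,M+p}$, i.e. a genuine ``$I$-diamond loop square''. Now compare two infinite words from the state $q_{N,M}$: $\alpha:=(a^p b^p)^\omega$, which has infinitely many $a$'s and $b$'s hence (prefixed by $a^N b^M$) lies in $L$; and a word $\beta$ that first reads $a^p$ then stays in the pure-$b$ cycle forever, i.e. $a^p (b^{p})^\omega$ — wait, that still has infinitely many $b$'s. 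The right rejected comparison word is $a^\omega$-type: from $q_{N,M}$ read $a^p$ to come back, giving $a^\omega$; prefixed by $a^N b^M$ this has $|{\cdot}|_b=M$, which is in $L$ iff $M$ is even. So I choose $M$ odd. Then $a^N b^M a^\omega\notin L$, yet its run from $q_{N,M}$ is the $a^p$-cycle; I will argue this $a^p$-cycle is the \emph{same} cycle that carries an accepting state in the analysis of the accepted word $a^{N'}(a^{2p})^\omega$-type word (after aligning periods and using that all these $a$-columns eventually share a state), forcing $\aut{A}$ to accept $a^N b^M a^\omega$ — contradiction. A symmetric safeguard handles the case where the accepting $a$-cycle happens to avoid the relevant column.

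The step I expect to be the main obstacle is making the alignment in the last paragraph airtight: I must guarantee that the $a$-cycle entered after the prefix $a^N b^M$ (with $M$ odd) is \emph{forced} to contain an accepting state. The clean way is to first establish a lemma: in any $I$-diamond DBA for $L$, every state reachable by some $a^i b^j$ and lying on a cycle labelled by a power of $a$ must be accepting, because from such a state one can append $b^{2k} a^\omega$ for large even $2k$ (giving a word in $L$ with infinitely many $a$'s) and also, by $I$-diamond commutation, the run on $a^\omega$ from that state is cofinally the same cycle — so if the cycle were all-rejecting, $a^i b^{j} a^\omega$ with $j$ odd would be correctly rejected, but $a^i b^{j+(\text{odd})} \ldots$ hmm; rather, I compare $a^i b^j a^\omega$ (reject, $j$ odd) with $a^i b^{j'} a^\omega$ (accept, $j'$ even) whose runs merge on the same terminal $a$-cycle once $b$-columns stabilize, and that shared cycle cannot be simultaneously all-accepting and all-rejecting. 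Pushing through this merging-of-columns argument — i.e. that for all large $i$, $\delta(q_0,a^i b^j \cdot a^\omega)$ eventually enters a cycle independent of $j$ — using only the $I$-diamond identity and finiteness, is the technical heart; everything else is bookkeeping with parities.
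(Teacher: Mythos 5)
There is a genuine gap, and it sits exactly where you predicted: the ``merging of columns'' lemma is false, and the $I$-diamond property cannot deliver it. The $I$-diamond identity only forces $\delta(q_0,u)=\delta(q_0,v)$ for \emph{trace-equivalent} $u,v$, i.e.\ (over this alphabet) for words with the same Parikh image. It says nothing about words with different numbers of $b$'s, so there is no reason why the runs of $a^ib^ja^\omega$ and $a^ib^{j'}a^\omega$ with $j\neq j'$ should ever reach a common state, let alone a common terminal $a$-cycle. The minimal $I$-diamond DFA of Figure~\ref{fig:noLim} is a concrete counterexample to your lemma: $\delta(0,ab)=4$ and the $a$-cycle from there is $\{4,6\}$, while $\delta(0,ab^2)=7$ and the $a$-cycle from there is $\{7,8\}$ --- the parity of the $b$-count is remembered forever and the columns never merge. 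The same obstruction blocks the natural repair of finding $q_{N,M}=q_{N,M+p}$ with $p$ odd: nothing prevents every $b$-period in the automaton from being even (as it is in Figure~\ref{fig:noLim}), so you cannot force two comparison words with $b$-counts of different parity to share a state. Since your entire contradiction is routed through this step, the proof does not go through.

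The paper's proof avoids this by only ever comparing words that differ by pumping a loop \emph{inside the prefix} and then carry the \emph{identical} suffix: from $a^{2n+1}b^{2n+1}$ it extracts $\delta(q_0,a^{k_1})=\delta(q_0,a^{k_1+k_2})=p_1$ with $k_1+k_2=2n+1$, so $a^{k_1}b^{\ell_1}b^\omega$ and $a^{2n+1}b^{\ell_1}b^\omega$ induce runs that literally coincide from $p_1$ onward and in particular end in the same $b^{\ell_2}$-cycle at $p_3$, while exactly one of the two words can lie in $L$ when $k_1$ is even. The case $k_1$ odd is then handled by a further case split on the parity of $\ell_1$, using the $a^{k_2'}$- and $b^{\ell_2}$-loops at $p_3$ and finally the word $a^{k_1}b^{\ell_1}(a^{k_2'}b^{\ell_2})^\omega$, which must be accepted (both letters occur infinitely often) yet sees only the two loops already shown to be rejecting. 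If you want to keep your ``loop square'' setup, you would need to restructure the argument along these lines --- always pump within the prefix so the compared runs share their tail verbatim --- rather than rely on distinct $b$-columns collapsing.
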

\noindent A proof of this proposition can be found in the appendix.

\begin{corollary}
 There exists a family $\K$ of trace-closed regular languages of finite words, namely $\K \coloneqq \{[(a^m)^+ (b^n)^+]_{\trace{I}} \mid m, n \geq 2\}$ over $\Sigma = \{a, b\}$, such that given $T = \Gamma(K)$ for any $K \in \K$, there exists no $I$-diamond DBA recognizing $\Gamma^{-1}(\lim(T))$.
\end{corollary}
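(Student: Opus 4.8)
The plan is to reduce to Proposition~\ref{prop:noLim} by observing that its argument never uses the modulus $2$ except through the fact that it is at least $2$. Fix $m,n\geq 2$, put $K=[(a^m)^+(b^n)^+]_{\trace{I}}$ (so $K$ consists of all $w\in\{a,b\}^*$ with $|w|_a$ a positive multiple of $m$ and $|w|_b$ a positive multiple of $n$) and $T=\Gamma(K)$. First I would compute $\lim(T)$ exactly as in Example~\ref{ex:noLim}: since $a$ and $b$ are independent, a trace over $\{a,b\}$ is determined by its numbers of $a$'s and $b$'s (each in $\naturals\cup\{\infty\}$), the prefix relation is componentwise, and an increasing chain in $T$ with least upper bound $\theta$ exists exactly when either $|\theta|_a=\infty$ and $|\theta|_b$ is $\infty$ or a positive multiple of $n$, or symmetrically with $a$ and $b$ interchanged. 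Hence $L_{m,n}:=\Gamma^{-1}(\lim(T))$ is the trace-closed language of all $\alpha\in\{a,b\}^\omega$ satisfying this same numeric condition; for $m=n=2$ it is precisely the language $L$ of Example~\ref{ex:noLim}. (One should be careful that the word ``positive'' is needed here: $0$ is a multiple of $m$, but $b^\omega\notin L_{m,n}$.)

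Next I would revisit the proof of Proposition~\ref{prop:noLim} and isolate the properties of $L$ on which it rests, which are: (i)~for $j\geq 1$, $a^jb^\omega\in L$ iff $j$ is a positive multiple of the $a$-modulus, and symmetrically $b^ja^\omega\in L$ iff $j$ is a positive multiple of the $b$-modulus; (ii)~$a^\omega\notin L$ and $b^\omega\notin L$; (iii)~every $\alpha$ with $|\alpha|_a=|\alpha|_b=\infty$ belongs to $L$; and (iv)~in an $I$-diamond automaton the state reached on a word $u$ depends only on $(|u|_a,|u|_b)$, since two words over $\{a,b\}$ with the same letter counts are $\trace{I}$-equivalent, so one obtains well-defined states $q_{j,k}$ with $a$-edges $q_{j,k}\to q_{j+1,k}$ and $b$-edges $q_{j,k}\to q_{j,k+1}$. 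None of these mention the value $2$: the moduli enter only as ``integers $\geq 2$'', the lower bound being exactly what separates the two alternatives in~(i) and what forces~(ii) (for modulus $1$ one would have $a^jb^\omega\in L$ for every $j\geq 1$, and the statement would fail). Since $L_{m,n}$ satisfies the analogues of (i)--(iv) with ``positive multiple of $2$'' replaced by ``positive multiple of $m$'' (resp.\ $n$), the argument of Proposition~\ref{prop:noLim} goes through verbatim and shows that no $I$-diamond DBA recognizes $L_{m,n}=\Gamma^{-1}(\lim(T))$, which is the assertion.

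The one thing genuinely to be checked is the claim made above, namely that the appendix proof of Proposition~\ref{prop:noLim} uses nothing about its moduli beyond being $\geq 2$; granting this, the corollary follows uniformly for all $m,n\geq 2$. I would also remark, to forestall a tempting shortcut, that one cannot instead argue by substituting letters so as to turn $L_{m,n}$-membership into $L$-membership and composing with an $I$-diamond DBA for $L_{m,n}$: such a substitution does realize the language-level reduction, but its composition with the automaton need not be $I$-diamond, because the B\"uchi flag ``an accepting state was visited inside the current emitted block'' is not invariant under transposing two independent letters (the intermediate states inside a block depend on what preceded it). A useful picture to keep in mind is that the minimal DFA for $K$ is an $(m{+}1)\times(n{+}1)$ grid generalizing Figure~\ref{fig:noLim}, whose unique accepting corner can be avoided forever by looping in the interior, so it fails to recognize $\lim(T)$ as a DBA; the argument sketched above upgrades this to the statement that \emph{no} $I$-diamond DBA does.
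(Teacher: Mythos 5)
Your proposal is correct and matches the paper's (implicit) intent: the corollary carries no separate proof precisely because the argument of Proposition~\ref{prop:noLim} is meant to generalize, and your properties (i)--(iv) correctly isolate what that argument actually uses. The only detail worth making explicit is the replacement of the exponent $2n+1$ by some $N$ exceeding the number of states of the hypothetical DBA with $N\not\equiv 0$ both mod $m$ and mod $n$ (e.g.\ $N=mn\cdot|Q|+1$), which is what makes the pumped words in each case of the analysis fall outside $L_{m,n}$.
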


\begin{definition}
 A trace-closed language $K \subseteq \Sigma^*$ is \emph{$I$-limit-stable} (or simply \emph{limit-stable}) if $\lim(K)$ is also trace-closed. By extension, $T \subseteq \mStar$ is \emph{limit-stable} if $\Gamma^{-1}(T)$ is.
\end{definition}

Toward characterizing limit-stable languages, we introduce some definitions. Let $T\subseteq\mStar$ be a language of traces and let $t\pprefix t'$ be two traces. The \emph{prefix graph} of the pair $(t,t')$ is the directed, acyclic graph $G_{t,t'}=(V,E)$ with $V=\set{x\in\mStar\mid t\prefix x\prefix t'}$ and $(x,y)\in E$ if $y=x\odot a$ for some $a\in\Sigma$. A \emph{cut} of $G_{t,t'}$ is a set $C\subseteq V\setminus\set{t,t'}$ such that each path from $t$ to $t'$ in $G_{t,t'}$ visits at least one vertex from $C$. Note that if $t'=t\odot a$ for some $a\in\Sigma$, then $G_{t,t'}$ does not admit a cut. A pair $(t,t')$ is \emph{$T$-separable} if $G_{t,t'}$ admits a cut $C\subseteq T$.

Let $\theta\in\lim(T)$. Define an infinite transition-graph $G=G_{\theta}=(V,\Delta)$ with $V=\set{t\in\mStar\mid t\prefix \theta}$ and $(t,a,t')\in \Delta$ if $t'=t\odot a$ for some $a\in\Sigma$. Then there is a one to one correspondence between the paths starting from $\epsilon$ through $G$ and the linearizations of $\theta$. More precisely, for any finite word $u\in\Sigma^*$, there exists a run $\rho_u$ from $\epsilon$ on $u$ in $G_{\theta}$ iff $u$ is the linearization of some prefix $t$ of $\theta$. An infinite word $\alpha$ is a linearization of $\theta$ iff $\alpha[1,n]$ is a linearization of some prefix $t_n$ of $\theta$ for all $n\in\naturals$. Hence, an $\omega$-word $\alpha$ is a linearization of $\theta$ iff it induces a run $\rho_{\alpha}$ in $G_{\theta}$. 

Let $S$ be a finite semigroup, let $P\subseteq S$, and let $(s,e)$ be a linked pair of $S$. Let $\varphi$ be a morphism from $\mStar$ onto $S$. The pair $(s,e)$ has the \emph{$P$-cut property} if
\begin{itemize}
\item either for every factorization $\varphi(a_1)\cdots \varphi(a_k)=e$ with $a_i\in\Sigma$, we have $e\varphi(a_1\odot\cdots\odot a_j)\in s^{-1}P$ for some $j \in [1,k]$;
\item or for every factorization $\varphi(a_1)\cdots \varphi(a_k)=e$ with $a_i\in\Sigma$, we have $e\varphi(a_1\odot\cdots\odot a_j)\notin s^{-1}P$ for all $j \in [1,k]$.
\end{itemize}

\begin{lemma}
\label{lem:exists_satur_and_rec_semigrp}
Let $T \in \rec(\mStar)$. Then there exists a finite semigroup $S$ and a saturating morphism $\mapping{\alpha}{\mStar}{S}$ which recognizes both $\lim(T)$ and $T$. 
\end{lemma}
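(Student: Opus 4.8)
The plan is to build the desired semigroup by taking a suitable product of semigroups recognizing $T$ and $\lim(T)$, and then passing to a refinement that also saturates $\lim(T)$. First I would invoke the Remark following Definition \ref{def:limT}: since $T\in\rec(\mStar)$, there is a finite semigroup $S_0$ and a morphism $\mapping{\varphi_0}{\mStar}{S_0}$ recognizing $T$, and moreover $\lim(T)=\bigcup_{(s,e)\in P_T}\varphi_0^{-1}(s)\odot(\varphi_0^{-1}(e))^\omega$ for a set $P_T$ of linked pairs of $S_0$. So the \emph{same} morphism $\varphi_0$ already "recognizes" $\lim(T)$ in the algebraic sense of Preliminaries, provided it saturates $\lim(T)$. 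The only thing that can fail is saturation: it is not automatic that for every linked pair $(s,e)$ of $S_0$ we have $\varphi_0^{-1}(s)\odot(\varphi_0^{-1}(e))^\omega$ either contained in or disjoint from $\lim(T)$. So the real content of the lemma is: enrich $S_0$ to a finite semigroup $S$ with a morphism $\mapping{\alpha}{\mStar}{S}$ factoring through $\varphi_0$ (so $\alpha$ still recognizes $T$ and still has a linked-pair description of $\lim(T)$), and such that $\alpha$ saturates $\lim(T)$.

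The standard device for forcing saturation is to move to the \emph{syntactic} semigroup of $\lim(T)$, or more cheaply to any finite semigroup whose congruence refines the relation "$u\odot v^\omega$-membership in $\lim(T)$ depends only on the images." Concretely I would consider $S\coloneqq S_0\times S_{\lim}$, where $S_{\lim}$ is a finite semigroup recognizing the trace-closed $\omega$-regular language $\lim(T)$ in the sense of the last paragraph of the Preliminaries (such an $S_{\lim}$ and saturating morphism $\varphi_{\lim}$ exist because $\lim(T)\in\rec(\mOmega)$, by the very definition of recognizability for $\omega$-trace languages via semigroups), and let $\alpha=(\varphi_0,\varphi_{\lim})$. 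Then $\alpha$ recognizes $T$ (since its first component does), and $\alpha$ saturates $\lim(T)$: if $(s,e)=((s_0,s_1),(e_0,e_1))$ is a linked pair of $S$, then $(s_1,e_1)$ is a linked pair of $S_{\lim}$, and $\alpha^{-1}(s)\odot(\alpha^{-1}(e))^\omega\subseteq\varphi_{\lim}^{-1}(s_1)\odot(\varphi_{\lim}^{-1}(e_1))^\omega$, which is either inside $\lim(T)$ or disjoint from it because $\varphi_{\lim}$ saturates. Finally, because $\alpha$ saturates $\lim(T)$ and $\varphi_0$ already exhibited $\lim(T)$ as a union of sets $\varphi_0^{-1}(s)\odot(\varphi_0^{-1}(e))^\omega$, pulling these back along the projection $S\to S_0$ shows $\lim(T)$ is a union of sets $\alpha^{-1}(s)\odot(\alpha^{-1}(e))^\omega$ over a set $P$ of linked pairs of $S$; hence $\alpha$ recognizes $\lim(T)$ as well.

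A couple of small points need care. One must check that $\alpha$ is indeed a morphism of semigroups into a \emph{finite} semigroup — immediate, as a product of two finite semigroups with the componentwise operation, and the image $\alpha(\mStar)$ is the relevant (finite) subsemigroup. One should also confirm the claim "$\varphi_0$ exhibits $\lim(T)$ as such a union" is exactly what the Remark after Definition \ref{def:limT} gives, which it does. The main obstacle — and the only genuinely non-formal step — is ensuring that $\lim(T)$ really is recognizable by a finite semigroup in the sense required, i.e. that the algebraic recognizability from the Preliminaries applies to $\lim(T)$; this is guaranteed by $\lim(T)\in\rec(\mOmega)$ together with the stated equivalence of the automata-theoretic and algebraic notions of recognizability for infinite traces, so it reduces to correctly invoking material already in the excerpt. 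Everything else is bookkeeping about linked pairs under a projection morphism.
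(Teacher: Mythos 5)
Your proposal is correct and follows essentially the same route as the paper: both form the product of a finite semigroup saturating $\lim(T)$ with one recognizing $T$, observe that saturation and recognition of $T$ are inherited componentwise, and then use a Ramsey/superfactorization argument on linked pairs of the product to recover the linked-pair description of $\lim(T)$. The only cosmetic difference is which factor you pull the linked pairs back from; no further comment is needed.
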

Such a morphism is said to \emph{simultaneously recognize} $T$ and $\lim(T)$. Given an automaton, we write $p \xrightarrow{u} q$ if some $u \in \Sigma^*$ leads from $p$ to $q$, and $p \xRightarrow{u} q$ if a final state is also visited.

\begin{definition}
Given $(\Sigma,I)$, let $\aut{A}=(Q,\Sigma,q_0,\delta,F)$  be an $I$-diamond automaton. $\aut{A}$ is \emph{$F,I$-cycle closed}, if for all $u\trace{I} v$ and all $q$ we have $q\xRightarrow{u}q$ iff $q\xRightarrow{v}q$.
\end{definition}

We can now give an effective characterization of limit-stable languages. Due to space constraints, we only present a part of the following proof here. Lem. \ref{lem:exists_satur_and_rec_semigrp} ensures that (\ref{thm:lim_trace_char:5}) is not trivially satisfied.

\begin{theorem}\label{thm:lim_trace_char}
Let $T \in \rec(\mStar)$ and let $K=\Gamma^{-1}(T)$. The following are equivalent:
\begin{enumerate}[(a)]
\item $K$, and therefore $T$, is limit-stable.\label{thm:lim_trace_char:1}
\item For all sequences $(t_i)=t_0\pprefix t_1\pprefix t_2\cdots \subseteq T$ and all sequences $(u_i)_i$ with $u_i\in\Gamma^{-1}(t_i)$, there exists a subsequence $(u_{j_i})_i$ and a sequence $(v_{j_i})_i$ of proper prefixes $v_{j_i}\pprefix u_{j_i}$ with $|v_{j_i}|<|v_{j_{i+1}}|$ and $v_{j_i}\in K$ for all $i\in\naturals$.\label{thm:lim_trace_char:2}
\item For any $\theta \in \lim(T)$ there exists a strictly monotone $(n_i)_i$ such that any infinite path $\rho$ in $G_\theta$ visits $T$ in each segment $\rho(n_i,n_{i+1}-1)$.\label{thm:lim_trace_char:3}
\item Let $(t_i)_{i}$ be a sequence of traces in $T$. Then there exists a subsequence $(t_{m_i})_i$, such that $(t_{m_i},t_{m_{i+1}})$ is $T$-separable for all $i$.\label{thm:lim_trace_char:4}
\item If $T$ and $\lim(T)$ are simultaneously recognized by a morphism $\mapping{\varphi}{\mStar}{S}$ for some finite semigroup $S$, then every linked pair $(s,e)$ has the $\varphi(T)$-cut property. \label{thm:lim_trace_char:5}  
\item Any DFA $\aut{A}$ recognizing $K$ is $F,I$-cycle closed.\label{thm:lim_trace_char:6}  
\end{enumerate}
\end{theorem}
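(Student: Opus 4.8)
**The plan is to prove the equivalences cyclically, clustering the combinatorial statements together and treating the algebraic characterization (e) and the automata-theoretic one (f) as reformulations.** I would organize the argument as $(\ref{thm:lim_trace_char:1}) \Rightarrow (\ref{thm:lim_trace_char:4}) \Rightarrow (\ref{thm:lim_trace_char:2}) \Leftrightarrow (\ref{thm:lim_trace_char:3}) \Rightarrow (\ref{thm:lim_trace_char:1})$ for the "concrete" conditions, and then fold in $(\ref{thm:lim_trace_char:5})$ and $(\ref{thm:lim_trace_char:6})$ via the bridge provided by Lemma~\ref{lem:exists_satur_and_rec_semigrp}. The key observation throughout is that a linearization $\alpha$ of $\theta \in \lim(T)$ is exactly an infinite path through $G_\theta$, so "$\lim(K)$ is trace-closed" says precisely that whenever one path through $G_\theta$ reaches $K$ infinitely often, so does every path. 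The finitary, Ramsey-style content is the passage from "every path hits $T$ infinitely often" to "there is a uniform sequence of cuts $(n_i)$ along which every path hits $T$" — this is König's-lemma/compactness in the infinite DAG $G_\theta$, whose levels are finite.

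\medskip

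\noindent\textbf{Step 1: $(\ref{thm:lim_trace_char:1}) \Rightarrow (\ref{thm:lim_trace_char:3})$ and $(\ref{thm:lim_trace_char:3}) \Leftrightarrow (\ref{thm:lim_trace_char:2})$.} Assume $K$ is limit-stable and fix $\theta \in \lim(T)$. By Def.~\ref{def:limT} there is a chain $(t_i)$ in $T$ with $\bigsqcup t_i = \theta$, hence some linearization of $\theta$ passes through infinitely many vertices of $G_\theta$ lying in $T$; trace-closedness of $\lim(K)$ forces \emph{every} linearization, i.e. every infinite path in $G_\theta$, to visit $T$ infinitely often. Now I extract the uniform bound $(n_i)$: since every level $\{t \sqsubseteq \theta : |t| = n\}$ of $G_\theta$ is finite, a compactness argument shows that if every infinite path visits $T$ infinitely often then for every $m$ there is $N(m)$ such that every path of length $N(m)$ from level $m$ hits $T$ — otherwise König's lemma produces an infinite path avoiding $T$ past level $m$. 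Iterating yields $(n_i)$, giving (\ref{thm:lim_trace_char:3}). The equivalence $(\ref{thm:lim_trace_char:3}) \Leftrightarrow (\ref{thm:lim_trace_char:2})$ is the translation between paths in $G_\theta$ and linearizations: a path segment $\rho(n_i, n_{i+1}-1)$ visiting $T$ is exactly a prefix $v \sqsubset u$ of a linearization with $v \in K$ and $n_i \le |v| < n_{i+1}$, and conversely; one direction produces the subsequence $(u_{j_i})$ by diagonalizing over a chain building up $\theta$, the other direction recovers a chain in $T$ from the witnessing prefixes.

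\medskip

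\noindent\textbf{Step 2: $(\ref{thm:lim_trace_char:2}) \Rightarrow (\ref{thm:lim_trace_char:1})$ and $(\ref{thm:lim_trace_char:1}) \Leftrightarrow (\ref{thm:lim_trace_char:4})$.} For $(\ref{thm:lim_trace_char:2}) \Rightarrow (\ref{thm:lim_trace_char:1})$: take $\alpha \in \lim(K)$ and $\beta \trace{I} \alpha$; both linearize the same $\theta$, and since $\alpha$ witnesses infinitely many prefixes in $K$, the chain $(t_i)$ of corresponding prefix-traces lies in $T$ and has $\bigsqcup t_i = \theta$, so $\theta \in \lim(T)$; applying (\ref{thm:lim_trace_char:2}) to $(t_i)$ together with the specific linearization $\beta$ (restricted to a cofinal sequence of prefixes of $\beta$) yields infinitely many prefixes of $\beta$ in $K$, so $\beta \in \lim(K)$. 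For $(\ref{thm:lim_trace_char:1}) \Leftrightarrow (\ref{thm:lim_trace_char:4})$: a $T$-separable pair $(t, t')$ is precisely a pair where every path from $t$ to $t'$ in the prefix graph is forced through $T$; summing these cuts along a chain $(t_{m_i})$ whose limit is $\theta$ is the finitary shadow of condition (\ref{thm:lim_trace_char:3}), and the compactness argument of Step~1 (applied between consecutive $t_{m_i}$) converts "every path hits $T$ infinitely often" into the existence of such a separating subsequence, and conversely separability at every stage forces every path to hit $T$ infinitely often.

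\medskip

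\noindent\textbf{Step 3: incorporating $(\ref{thm:lim_trace_char:5})$ and $(\ref{thm:lim_trace_char:6})$.} By Lemma~\ref{lem:exists_satur_and_rec_semigrp} there \emph{is} a morphism $\varphi\colon \mStar \to S$ simultaneously recognizing $T$ and $\lim(T)$, so (\ref{thm:lim_trace_char:5}) is non-vacuous. I unwind the $\varphi(T)$-cut property: a factorization $\varphi(a_1)\cdots\varphi(a_k) = e$ of the idempotent is exactly a linearization $a_1 \cdots a_k$ of a trace whose $\varphi$-image is $e$, and $e\varphi(a_1 \odot \cdots \odot a_j) \in s^{-1}P$ says the $j$-th prefix along that linearization, preceded by an $s$-block, lands in $T$. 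By Ramsey, any $\theta \in \lim(T)$ admits a superfactorization associated with a linked pair $(s,e) \in P_T$, and then "every linearization of the $e$-blocks hits $T$" (the first clause of the cut property, holding uniformly) is the semigroup-level encoding of (\ref{thm:lim_trace_char:2}); the second clause corresponds to the case where the $e$-block never contributes a $T$-prefix, which is incompatible with $\theta \in \lim(T)$ unless handled by the $s$-block — so I must argue carefully that for $(s,e) \in P_T$ the first alternative is the relevant one. This is where I expect the main subtlety: matching the "for every factorization" quantifier in the cut property with the "for all linearizations $\beta$" quantifier in trace-closedness requires that $\varphi$ be $I$-diamond-compatible (which it is, being a trace morphism) and that the block structure of the superfactorization be chosen uniformly. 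Finally, $(\ref{thm:lim_trace_char:6})$: the minimal DFA $\aut{A}_K$ is $I$-diamond (stated in the preliminaries), and "$F,I$-cycle closed" — $q \xRightarrow{u} q \Leftrightarrow q \xRightarrow{v} q$ for $u \trace{I} v$ — is exactly the statement that a loop at $q$ reading a trace-equivalence class is accepting uniformly; combining the transition monoid of $\aut{A}_K$ with the acceptance set recovers a saturating morphism of the kind in (\ref{thm:lim_trace_char:5}), and the cycle-closure condition translates into the cut property on idempotents, closing the cycle. The equivalence "any DFA" versus "some DFA" follows because all DFAs for $K$ have the same transition monoid quotient relevant to trace-equivalent loops.

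\medskip

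\noindent\textbf{Main obstacle.} The genuinely delicate point is the compactness step in Step~1 — converting "every infinite path of $G_\theta$ visits $T$ infinitely often" into a single uniform sequence $(n_i)$ of cut-levels valid for \emph{all} paths simultaneously. Naively this looks like it could fail, but finiteness of each level of $G_\theta$ (guaranteed since a trace has only finitely many prefixes of each size) rescues it via König's lemma; I need to state this finiteness explicitly and apply the lemma to the subgraph of paths that avoid $T$ strictly between two levels. The second obstacle, in Step~3, is purely bookkeeping: aligning the existential/universal quantifier alternation in the $P$-cut property with trace-closedness, which I resolve by observing that for a linked pair in $P_T$ the $e$-blocks must carry the "infinitely often in $T$" witness, so the first clause of the cut property is forced.
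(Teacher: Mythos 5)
Your Steps 1 and the direction (\ref{thm:lim_trace_char:1})$\Rightarrow$(\ref{thm:lim_trace_char:3})$\Rightarrow$(\ref{thm:lim_trace_char:4}) match the paper (the K\"onig/compactness uniformization over the finite levels of $G_\theta$ is exactly its argument), but the proposal has a genuine gap where you try to close the cycle combinatorially. The asserted implication (\ref{thm:lim_trace_char:4})$\Rightarrow$(\ref{thm:lim_trace_char:1}) (``separability at every stage forces every path to hit $T$ infinitely often'') does not follow: a cut of $G_{t_{m_i},t_{m_{i+1}}}$ only constrains paths that begin at $t_{m_i}$ and end at $t_{m_{i+1}}$, whereas an arbitrary linearization $\beta$ of $\theta$ traverses a maximal chain of prefixes of $\theta$ that in general contains \emph{none} of the $t_{m_i}$. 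For instance, with $a\,I\,b$, linearizations of the trace with infinitely many $a$'s and $b$'s correspond to monotone lattice paths in $\naturals^2$, and such a path (e.g.\ $aa(ba)^\omega$) can avoid every point $(2i,2i)$; padding $\beta$'s segment to a path from $t_{m_i}$ to $t_{m_{i+1}}$ allows the cut to be met inside the padding rather than among $\beta$'s own prefixes. This is precisely why the paper does not prove (\ref{thm:lim_trace_char:4})$\Rightarrow$(\ref{thm:lim_trace_char:1}) directly but routes (\ref{thm:lim_trace_char:4})$\Rightarrow$(\ref{thm:lim_trace_char:5})$\Rightarrow$(\ref{thm:lim_trace_char:1}): the $P$-cut property is universally quantified over \emph{all} letter-factorizations of the idempotent $e$, and Ramsey's theorem applied to the given linearization $\beta$ produces an $e$-factorization of $\beta$ itself, which is exactly the missing control over linearizations that dodge the chain. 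You must either reproduce that detour or supply a genuinely new argument here.

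Two further points. First, your (\ref{thm:lim_trace_char:2})$\Rightarrow$(\ref{thm:lim_trace_char:1}) applies (\ref{thm:lim_trace_char:2}) ``to the specific linearization $\beta$ restricted to a cofinal sequence of prefixes,'' but prefixes of $\beta$ are not elements of $\Gamma^{-1}(t_i)$, so (\ref{thm:lim_trace_char:2}) does not apply to them; the paper must extend each minimal prefix $w_i$ of $\beta$ covering $t_i$ to a full linearization of a later $t_j$, invoke (\ref{thm:lim_trace_char:2}) there, and then argue --- after normalizing the chain so that no element of $T$ lies strictly between consecutive $t_i$ --- that the resulting $K$-prefixes are already prefixes of $w_i$, hence of $\beta$. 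This particular gap is repairable inside your scheme, since (\ref{thm:lim_trace_char:3})$\Rightarrow$(\ref{thm:lim_trace_char:1}) is immediate and you do prove (\ref{thm:lim_trace_char:1})$\Rightarrow$(\ref{thm:lim_trace_char:3})$\Leftrightarrow$(\ref{thm:lim_trace_char:2}). Second, Step 3 is a plan rather than a proof: you correctly identify the quantifier-alignment issue for (\ref{thm:lim_trace_char:5}) and the transition-monoid-with-acceptance-bit idea behind (\ref{thm:lim_trace_char:6}) (the paper's extended profiles $\tau_w\subseteq Q\times\set{0,1}\times Q$), but the implications (\ref{thm:lim_trace_char:4})$\Rightarrow$(\ref{thm:lim_trace_char:5}), (\ref{thm:lim_trace_char:5})$\Rightarrow$(\ref{thm:lim_trace_char:1}), and both directions between (\ref{thm:lim_trace_char:1}) and (\ref{thm:lim_trace_char:6}) each require concrete constructions (Ramsey superfactorizations of an arbitrary linearization; the interleaved refactorization of $\alpha$ and $\beta$ along idempotent profiles with the words $x$ and $y$ mediating between the two factorizations) that your sketch announces but does not carry out.
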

\begin{proof}
(\ref{thm:lim_trace_char:1})$\implies$(\ref{thm:lim_trace_char:2}): If (\ref{thm:lim_trace_char:2}) is false, then we may choose a sequence $(t_i)_i$ of traces in $T$ with the property that for some sequence $(u_i)_i$ of linearizations of $(t_i)_i$, every subsequence $(u_{n_i})_i$, and every sequence $(v_{n_i})_i$ of proper prefixes $v_{n_i}\pprefix u_{n_i}$, $v_{n_i}\in K$, we have $\sup_i|v_{n_i}|<\infty$. Since $|\Sigma|<\infty$ we have that $\Sigma^{\infty}$ is a compact space. Hence $(u_i)_i$ has a converging subsequence $(u_{m_i})_i$. Because every subsequence of $(u_i)_i$ has the properties given in the previous sentence, so does $(u_{m_i})_i$. Let $\alpha=\lim_{i\to \infty} u_{m_i}$. Then $\alpha\trace{I} \beta$ for some $\beta= x\cdot y_1\cdot y_2\cdots$ with $x \cdot y_1\cdots y_i \in\Gamma^{-1}(t_{m_i})$. Hence, $\beta\in \lim(L)$. But, by construction, $\alpha\notin \lim(K)$ because for some $n\in\naturals$ no prefix of length $>n$ of $\alpha$ is in $K$.

(\ref{thm:lim_trace_char:2})$\implies$(\ref{thm:lim_trace_char:1}): Let $\theta=\bigsqcup_i t_i$ for traces $t_i\in T$. We may assume that $t_i\pprefix t\pprefix t_{i+1}$ implies $t\notin T$. Let $\alpha\in\Gamma^{-1}(\theta)$. Then we pick prefixes $(w_i)_i$ of $\alpha$, such that $w_i$ is of minimal length with $t_i\prefix \Gamma(w_i)$. Consider the subsequence $(t_{2i})_i$ of $(t_i)_i$. Each $w_{2i+1}$ is a prefix of some linearization of $t_{2(i+1)}$, say $u_{2(i+1)}$. We apply (\ref{thm:lim_trace_char:2}) to the sequence $(t_{2i})_i$ and get a sequence $(v_{2i})_i$ of proper prefixes of the $u_{2i}$, such that $\sup_i|v_{2i}|=\infty$ and $v_{2i}\in K$. We now have to show that $v_{2i}$ is already a prefix of $w_{2i-1}$. Suppose not, i.e. $w_{2i-1}\pprefix v_{2i}\pprefix u_{2i}$. Then this would give a trace $t \in T$ with $t_{2i-1}\pprefix t\pprefix t_{2i}$. 

(\ref{thm:lim_trace_char:1})$\implies$(\ref{thm:lim_trace_char:6}): Suppose $\aut{A}$ is not $I$-cycle closed. Then there exists $q\in Q$ and $u\trace{I} v$ with $q\xRightarrow{u}q$ but not $q\xRightarrow{v}q$. Since $\aut{A}$ is $I$-diamond, this means that the run $q\xrightarrow{v}q$ exists, but does not visit a final state. Now pick $x\in\Sigma^*$ with $q_0\xrightarrow{x}q$. Then $\alpha=x\cdot u^\omega\in \lim(K)$ and $\beta=x\cdot v^\omega\notin \lim(L)$. But clearly $\alpha\trace{I}\beta$ implies that $\lim(K)$ is not trace-closed.

(\ref{thm:lim_trace_char:6})$\implies$(\ref{thm:lim_trace_char:1}): Let $\alpha\trace{I} \beta$ and let $\alpha\in\lim(K)$. Take $\aut{A}=\aut{A}_K$ and consider extended transition profiles $\tau_w\subseteq Q\times\set{0,1}\times Q$ for $w\in\Sigma^*$ defined by $(p,1,q)\in\tau_w$ iff $p\xRightarrow{w} q$ and $(p,0,q)\in\tau_w$ iff $p\xrightarrow{w} q$ but not $p\xRightarrow{w} q$. Then we can factorize $\alpha=uv_0v_1v_2\cdots$ for finite words $u,v_0,v_1,\ldots$ with $\tau_u\cdot\tau_{v_i}=\tau_u$ and $\tau_{v_i}\cdot \tau_{v_i}=\tau_{v_i}$. Likewise we can factorize $\beta=u'v_0'v_1'\cdots$. 

Next, we observe that we find $r\in\naturals$ with $\Gamma(u'v_0')\prefix \Gamma(uv_0\cdots v_r)$. This gives $x\in\Sigma^*$ with $u'v_0'\cdot x\trace{I} uv_0\cdots v_r$. Conversely, there exists $m\in\naturals$ with $\Gamma(uv_0\cdots v_{r+1})\prefix \Gamma(u'v_0'\cdots v_m')$ and therefore $y\in\Sigma^*$ with $u'v_0'\cdots v_m' \trace{I} uv_0\cdots v_rv_{r+1}y \trace{I} u'v_0'xv_{r+1}y$, which implies $xv_{r+1}y\trace{I} v_1'\cdots v_m'$. 

Notice that if $q_0\xrightarrow{u} q$ and  $q_0\xrightarrow{u'} q'$, then (by trace equivalence and the fact that $\aut{A}$ is $I$-diamond) we have $q'\xrightarrow{x} q$. Likewise we have $q\xrightarrow{y} q'$ and $q'\xrightarrow{xv_{r+1}y}q'$. Now we can apply (\ref{thm:lim_trace_char:6}) to see that $q'\xRightarrow{xv_{r+1}y}q'$ iff $q'\xRightarrow{v_1'\cdots v_m'}q'$. However, since $\alpha\in\lim(K)$, since $\tau_{v_{r+1}}=\tau_{v_i}$ for all $i$, and since $q\xRightarrow{v_{r+1}}q$,  we have  $q'\xRightarrow{xv_{r+1}y}q'$.  Hence, $q'\xRightarrow{v_1'\cdots v_m'}q'$. Since furthermore $\tau_{v_1'\cdots v_m'}=\tau_{v_i'}$, we have for all $i, q'\xRightarrow[F]{v_i'}q'$ whence $\beta\in\lim(K)$.
\end{proof}

\begin{corollary}\label{cor:I-DBA-decidable}
Let $K=\Gamma^{-1}(T)$ for some $T \in \rec(\mStar)$. Given $\aut{A}_K$, it is decidable in time $\bigO(|Q|^2\cdot |\Sigma|(|\Sigma|+\log|Q|))$ whether or not $K$ is limit-stable.
\end{corollary}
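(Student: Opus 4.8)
The plan is to invoke the equivalence (\ref{thm:lim_trace_char:1})$\Leftrightarrow$(\ref{thm:lim_trace_char:6}) of Theorem~\ref{thm:lim_trace_char}: the language $K$ is limit-stable if and only if the minimal DFA $\aut{A}_K=(Q,\Sigma,q_0,\delta,F)$ is $F,I$-cycle closed; here $\aut{A}_K$ is $I$-diamond because $K=\Gamma^{-1}(T)$ is trace-closed, so the notion applies. It therefore suffices to decide $F,I$-cycle closedness of $\aut{A}_K$ within the stated bound.

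First I would reduce $F,I$-cycle closedness to a local condition plus a single reachability query. Since $\trace{I}$ on finite words is generated by transpositions of adjacent independent letters, and since $I$-diamond-ness guarantees that every word obtained from $u$ by such transpositions leads from a fixed state $q$ to the same target state, it suffices to test the defining biconditional $q\xRightarrow{w}q\Leftrightarrow q\xRightarrow{w'}q$ only for pairs $w=xaby$, $w'=xbay$ with $(a,b)\in I$. For such a pair, write $p=\delta(q,x)$ and $p'=\delta(p,ab)=\delta(p,ba)$; the runs of $\aut{A}_K$ from $q$ on $xaby$ and on $xbay$ visit exactly the same states except at the single position carrying $\delta(p,a)$ in one and $\delta(p,b)$ in the other. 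Hence $\aut{A}_K$ is \emph{not} $F,I$-cycle closed iff there are a state $p$ and a pair $(a,b)\in I$ such that (i) exactly one of $\delta(p,a)$, $\delta(p,b)$ lies in $F$, and (ii) there is a walk in $\aut{A}_K$ from $\delta(p,ab)$ to $p$ all of whose vertices, endpoints included, avoid $F$ (in particular $p,\delta(p,ab)\notin F$). The ``if'' direction is witnessed by the cycle obtained with base point $q:=\delta(p,ab)$ and empty $y$; the ``only if'' direction extracts $p,a,b$ from the first transposition along a transposition chain witnessing the failure, and concatenates the $x$- and $y$-segments of the offending run into the required $F$-avoiding walk.

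The algorithm is then straightforward: build the subgraph $G^F$ of $\aut{A}_K$ induced by $Q\setminus F$ (an edge $q\to q'$ whenever $\delta(q,c)=q'$ for some $c$ with $q,q'\notin F$); compute the $F$-avoiding reachability relation on $Q\setminus F$ by one backward search from each vertex; and finally sweep over all $p\in Q\setminus F$ and all $(a,b)\in I$, testing (i) and (ii) in constant time against the precomputed table, answering ``not limit-stable'' on the first hit and ``limit-stable'' otherwise. Correctness is immediate from the characterization. For the running time, building $G^F$ costs $\bigO(|Q|\cdot|\Sigma|)$, the reachability relation is obtained by $|Q|$ graph searches of cost $\bigO(|Q|\cdot|\Sigma|)$ each, and the final sweep costs $\bigO(|Q|\cdot|\Sigma|^2)$; summing these, and accounting for the storage and lookups of the reachability table, stays within the claimed $\bigO(|Q|^2\cdot|\Sigma|(|\Sigma|+\log|Q|))$.

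The hard part will be the reduction of the second paragraph, which has two delicate points. One is passing from arbitrary $\trace{I}$-equivalence to single transpositions: along a chain $u=w_0,\dots,w_k=v$ one must observe, using $I$-diamond-ness, that every intermediate $w_i$ still induces a run returning to $q$, so that the biconditional — assumed for each consecutive pair — propagates transitively to $(u,v)$. The other is the base-point shift: turning ``there exists an offending cycle anchored at some $q$'' into ``there exists an $F$-avoiding walk from $\delta(p,ab)$ to $p$'' requires absorbing the $x$- and $y$-segments of the cycle into a single walk and relocating the anchor, which is precisely the step that converts an existential statement over infinitely many words into a finite, checkable reachability query. Once this is nailed down, the remaining bookkeeping is routine.
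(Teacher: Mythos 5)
Your argument is correct, and it takes a genuinely different route from the paper's. Both proofs start from the equivalence (\ref{thm:lim_trace_char:1})$\Leftrightarrow$(\ref{thm:lim_trace_char:6}), but the paper then reduces $F,I$-cycle closure to a family of trace-closedness tests: for each state $q$ it forms the regular language $K_q = L(\aut{A}_{q,q})\cap L_E(\aut{A}_q)$ of words that loop at $q$ while visiting $F$ (computed by a product with a one-bit ``seen $F$'' flag), minimizes the resulting DFA with Hopcroft's algorithm, and checks whether the minimal DFA is $I$-diamond --- exploiting that a regular language is trace-closed iff its minimal DFA is $I$-diamond. This is where the $\log|Q|$ factor in the stated bound comes from. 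You instead unfold the definition combinatorially: using that $\trace{I}$ is generated by adjacent transpositions and that $I$-diamond-ness keeps all intermediate words of a transposition chain looping at the same base point, you localize any violation to a single ``bad diamond'' --- a state $p$ and $(a,b)\in I$ with exactly one of $\delta(p,a),\delta(p,b)$ in $F$, closed up by an $F$-avoiding walk from $\delta(p,ab)$ back to $p$. Your two delicate points (transitive propagation along the chain, and relocating the anchor to $\delta(p,ab)$ by concatenating the $x$- and $y$-segments) both check out. What each approach buys: the paper's is more modular, reusing minimization and the $I$-diamond test as black boxes; yours yields an explicit structural witness of non-limit-stability and a slightly better running time, $\bigO(|Q|^2\cdot|\Sigma| + |Q|\cdot|\Sigma|^2)$, which is comfortably within the claimed $\bigO(|Q|^2\cdot|\Sigma|(|\Sigma|+\log|Q|))$.
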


Let $L\subseteq \Sigma^\omega$ be recognizable, trace-closed. Pick a DACMA (c.f. Sec. \ref{prelim}) $\aut{a}$ recognizing $L$. Recall that the global transition behavior of $\aut{a}$ gives an $I$-diamond DFA, which we denote by $\aut{A}=(\prod_{a\in\Sigma} Q_a,\Sigma,q_0,\delta)$. Given $q\in Q_a$ we define the DBA $\aut{A}_q=(\prod_{a\in\Sigma} Q_a,\Sigma,q_0,\delta,F_q)$, where $F_q=\set{q}\times \prod_{b\neq a}Q_b$. Note that $\aut{A}_q$ is $F_q,I$-cycle closed, because for any $q'\in \prod_{a\in\Sigma} Q_a$ and all $u\trace{I} v$ with $q'\xrightarrow{u}q'$ and $q'\xrightarrow{v}q'$ we have\footnote{This can be proven by an induction on the number of swapping operations needed to obtain $v$ from $u$.} $\occset_a(q'\xrightarrow{u}q')=\occset_a(q'\xrightarrow{v}q')$.  Now: 
\begin{equation*}
L=\bigcup_{(F_a)_{a\in\Sigma}\in\F}\bigcap_{a\in\Sigma}\bigcap_{q\in F_a} L(\aut{A}_q)\cap \bigcap_{q\notin F_a} \overline{L(\aut{A}_q)}
\end{equation*}

In \cite{DM93}, it was shown using algebraic arguments that every \omReg trace language can be expressed as a finite Boolean combination of ``restricted'' $\lim$-languages. This result also extends to the corresponding trace-closed linearization languages. Our characterization of limits of limit-stable languages allows for a first automata-theoretic equivalence result.

\begin{theorem}
Let $L$ be a trace-closed $\omega$-language. $L$ is \omReg iff $L$ is a finite Boolean combination of $I$-diamond DBA recognizable trace-closed languages.
\end{theorem}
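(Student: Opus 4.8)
The plan is to establish the two directions separately. The backward direction is immediate: every $I$-diamond DBA recognizable trace-closed language is trivially \omReg, and the class of \omReg trace-closed languages is closed under finite Boolean combinations (being closed under complement and intersection, e.g. via deterministic Muller automata), so any finite Boolean combination of such languages is again an \omReg trace-closed language. The forward direction is the substantial one, and here the plan is to use the decomposition displayed just before the statement. Given a trace-closed \omReg language $L$, pick a DACMA $\aut{a}$ recognizing $L$ with global $I$-diamond DFA $\aut{A}=(\prod_{a\in\Sigma}Q_a,\Sigma,q_0,\delta)$ and acceptance family $\F\subseteq\prod_{a\in\Sigma}\Pow(Q_a)$. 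For each letter $a$ and each local state $q\in Q_a$, form the DBA $\aut{A}_q$ with final set $F_q=\set{q}\times\prod_{b\neq a}Q_b$. As noted in the excerpt, each $\aut{A}_q$ is $F_q,I$-cycle closed because along any cycle the set of $a$-components occurring is invariant under $I$-equivalent relabelings of the cycle word. The displayed identity then expresses $L$ as a finite Boolean combination of the languages $L(\aut{A}_q)$.

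The key remaining step is to argue that each $L(\aut{A}_q)$ is itself a Boolean combination of $I$-diamond DBA recognizable trace-closed languages — or, better, directly $I$-diamond DBA recognizable. Here I would invoke the limit-stable machinery of Theorem~\ref{thm:lim_trace_char}. The automaton $\aut{A}_q$, viewed as a DFA, recognizes a trace-closed regular language $K_q=\Gamma^{-1}(T_q)$ (trace-closedness follows from the $I$-diamond property and the fact that $F_q$ is a union of whole $\prod_{b\neq a}Q_b$-cosets); and because $\aut{A}_q$ is $F_q,I$-cycle closed, condition (\ref{thm:lim_trace_char:6}) of Theorem~\ref{thm:lim_trace_char} is met, so $K_q$ is limit-stable, i.e. $\lim(K_q)$ is trace-closed. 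Moreover $\aut{A}_q$ equipped with the Büchi condition recognizes precisely $\lim(K_q)$, since the minimal DFA for a regular language recognizes its $\lim$ as a DBA and $F_q,I$-cycle closedness transports this to $\aut{A}_q$. Hence each $L(\aut{A}_q)=\lim(K_q)$ is $I$-diamond DBA recognizable and trace-closed, and complements $\overline{L(\aut{A}_q)}$ are likewise trace-closed \omReg languages; assembling the finite Boolean combination yields the claim. This simultaneously re-proves the assertion in the introduction that every trace-closed \omReg $L$ is a Boolean combination of $I$-diamond DBA recognizable trace-closed languages.

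The main obstacle I anticipate is the bridge between $\aut{A}_q$ being $F_q,I$-cycle closed and it actually recognizing $\lim(K_q)$ as a DBA: one must check that $\aut{A}_q$ need not be the \emph{minimal} DFA for $K_q$, yet the Büchi-run semantics still coincides with $\lim$. The cleanest route is to observe that $\aut{A}_q$ is $I$-diamond and that, by $F_q,I$-cycle closedness, visiting $F_q$ infinitely often is invariant under passing to $I$-equivalent infinite words; combined with the general fact that any DFA (not just the minimal one) recognizes $\lim$ of its language as a DBA provided every final state is revisited, or by quotienting $\aut{A}_q$ to its minimal DFA and checking $F_q,I$-cycle closedness is inherited, one obtains $L(\aut{A}_q)=\lim(K_q)$. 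A secondary point requiring care is verifying trace-closedness of each $L(\aut{A}_q)$ directly; this is routine given the $I$-diamond property and the coset structure of $F_q$, but should be stated explicitly so that the Boolean combination stays within the class of trace-closed languages.
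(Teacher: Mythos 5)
Your proposal is correct and follows essentially the same route as the paper: the paper's argument is precisely the paragraph preceding the theorem, which decomposes $L$ via the global $I$-diamond DFA of a DACMA into the Boolean combination $\bigcup_{(F_a)_a\in\F}\bigcap_a\bigcap_{q\in F_a}L(\aut{A}_q)\cap\bigcap_{q\notin F_a}\overline{L(\aut{A}_q)}$ and notes each $\aut{A}_q$ is an $F_q,I$-cycle closed $I$-diamond DBA. Your extra worry about $\aut{A}_q$ not being minimal is harmless but moot --- for \emph{any} complete DFA recognizing $K_q$, a run visits the final set at position $n$ iff the length-$n$ prefix lies in $K_q$, so its B\"uchi language is $\lim(K_q)$ regardless of minimality --- and your explicit verification of trace-closedness of each $L(\aut{A}_q)$ (via the $I$-diamond property and the $(f)\Rightarrow(a)$ argument of Theorem~\ref{thm:lim_trace_char}) merely fills in a detail the paper leaves implicit.
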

\section{Conclusion}\label{sec:concl}
The main contribution of this paper is a new setup for a classification theory of languages of infinite traces (motivated by the first two levels of the Borel hierarchy). For any $T\in \rec(\mStar)$ we investigated the relationship between its infinitary extension $\ext(T)$ and the infinitary extension $\ext(K)$, where $K=\Gamma^{-1}(T)$. We showed that any such $K$ can be modified to $K_I$ such that $\ext(K_I)$ is also trace-closed and thus corresponds to the linearizations of $\ext(T)$. Building on this correspondence, we characterized the class of $I$-diamond DWA recognizable trace-closed languages in terms of Boolean combinations of trace-closed extensions of languages from $\REG$. In a similar vein, we characterized the class of languages $T\in\rec(\mStar)$ for which the linearization language of $\lim(T)$ is recognizable by an $I$-diamond DBA obtained from the minimal DFA for $\Gamma^{-1}(T)$, called limit-stable languages. Moreover, we showed that this class of languages is a decidable, proper subclass of finite recognizable trace languages. We proved how every recognizable language of infinite traces is a Boolean combination of languages $\lim(T)$ for limit-stable languages $T$.

\vspace{25pt}
\noindent \textbf{Acknowledgement}\hspace{5pt} We would like to thank Christof L\"{o}ding and Wolfgang Thomas for encouragement and numerous fruitful discussions.

\bibliographystyle{plain}
\bibliography{main}

\newpage

\appendix
\section{Proofs}

\subsection{Proof of Proposition \ref{prop:noLim}}

\begin{proposition*}
 There does not exist any $I$-diamond DBA recognizing $L \subseteq \Sigma^\omega$ as described in Example \ref{ex:noLim}.
\end{proposition*}
\begin{proof}
  Firstly, verify that $L$ is an \omReg trace-closed language. The transition graph of Figure \ref{fig:noLim} can be equipped with Muller accepting conditions to recognize $L$, namely $\F \coloneqq \{\{6,8\}, \{7,8\}, \{4,6,7\}, \{4,6,8\}, \{4,7,8\}, \{6,7,8\}, \{4,6,7,8\}\}$. Also note that since the Muller sets are closed under supersets, $L$ is in fact recognized by some DBA.
 
 Now, let us assume that $L$ is also recognized by some $I$-diamond DBA $\aut{A}_L$ with $n$ states. Let $q_0$ be the initial state and $\delta$ be the transition function of this automaton. We consider the word $a^{2n+1}b^{2n+1}$. Let $p_1 = \delta(q_0, a^{2n+1})$, $p_2 = \delta(q_0, b^{2n+1})$, and $p_3 = \delta(q_0, a^{2n+1}b^{2n+1}) = \delta(q_0, b^{2n+1}a^{2n+1})$. Let $k_2$ be the smallest non-zero number such that $\delta(p_1, a^{k_2}) = p_1$. Then we can factorize $a^{2n+1}$ into $a^{k_1}a^{k_2}, k_1 + k_2 = 2n+1$. Now, for $b^{2n+1}$, let $\ell_2$ be the smallest non-zero number that yields the corresponding factorization $b^{\ell_1}b^{\ell_2}$ at state $p_3$. This is shown in Figure \ref{fig:noDetBuechi}, which shows the transition subgraph that must necessarily occur in the automaton. Along state $p_2$, we obtain another pair of factorizations with $k'_1 + k'_2 = \ell'_1 + \ell'_2 = 2n+1$. Moreover, our assumptions ensure that $k'_2 > 0$ and $\ell_2 > 0$. Now consider the following possibilities.

\begin{figure}[h]
 \centering
 	\begin{tikzpicture}[node distance=10mm, shorten >=1pt, thick, >=stealth, every state/.style={minimum size=3mm}, initial text=, transform shape]
	  \node[state, initial] (q) at (0,0)  {$q_0$};
	  \node[state] (p1) at (1.5,1.5) {$p_1$};
	  \node[state] (p2) at (1.5,-1.5) {$p_2$};
	  \node[state] (p4) at (3,0) {$p_3$};

	  \path[->]  
			(q) edge node [auto]  {$a^{k_1}$} (p1)
			  edge node [auto, swap]  {$b^{\ell_1'}$} (p2)
			
			(p1) edge node[auto] {$b^{\ell_1}$} (p4)
			 edge[loop] node[auto] {$a^{k_2}$} ()

			(p2) 	edge node[auto,swap] {$a^{k_1'}$} (p4)
			  	edge[in=225, out=315, loop,swap] node[auto] {$b^{\ell_2'}$} ()

			(p4) edge[in=90, out=5, loop] node[auto] {$b^{\ell_2}$} ()
			(p4) edge[in=270, out=355, loop] node[auto,swap] {$a^{k_2'}$} ();
	\end{tikzpicture}   
\caption{Behavior of any ${I}$-diamond DBA ${\aut{A}_L}$ over ${a^{2n+1}b^{2n+1}}$.}
\label{fig:noDetBuechi}
\end{figure}
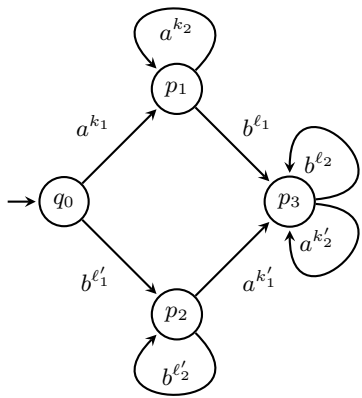

If $k_1$ is even, then the word $a^{k_1}b^{\ell_1}b^\omega \in L$. Therefore, the $\ell_2$-loop beginning at state $p_3$ must contain at least one Büchi accepting state. But then, the word $a^{k_1}a^{k_2}b^{\ell_1}b^\omega = a^{2n+1}b^\omega$ is also accepted, which is a contradiction. 

If $k_1$ is odd, then the $\ell_2$-loop beginning at state $p_3$ cannot contain any Büchi accepting states otherwise $a^{k_1}b^\omega$ will be accepted by the automaton. Now, $\ell_1$ can be either even or odd. In the former case, it must hold that the $k'_2$-loop beginning at state $p_3$ must contain a Büchi accepting state since $a^{k_1}b^{\ell_1}a^\omega \in L$. But then $a^{k_1}b^{\ell_1}b^{\ell_2}a^\omega$ will be accepted, leading to a contradiction. In the other case ($\ell_1$ odd), the $k'_2$-loop cannot contain any Büchi accepting states since $a^{k_1}b^{\ell_1}a^\omega \notin L$. But then, since the $\ell_2$-loop also does not have any accepting states, the word $a^{k_1}b^{\ell_1}(a^{k'_2}b^{\ell_2})^\omega \in L$ will also be rejected.    
\end{proof}

\subsection{Proof of Lemma \ref{lem:exists_satur_and_rec_semigrp}} 

\begin{lemma*}
Let $T\subseteq \mStar$ be a recognizable trace-language. Then there exists a finite semigroup $S$ and a morphism $\mapping{\alpha}{\mStar}{S}$ which saturates $\lim(T)$ and recognizes $T$. 
\end{lemma*}
\begin{proof}
There exists a finite semigroup $S'$ and a morphism $\varphi'$ which saturates $\lim(T)$. Furthermore, there exists a finite semigroup $S''$ and a morphism $\varphi''$ which recognizes $T$, say $T=\varphi''^{-1}(P)$. Let $S:=S'\times S''$ and $\varphi:=\varphi'\times\varphi''$. Then $\alpha$ recognizes $T$ and saturates $\lim(T)$. It remains to show that there stil exists a set of linked pairs $(s,e)$ of $S$ recognizing $\lim(T)$. To see this, pick any linked pair of $S'$, say $(s',e')$. Then any trace $\theta$ associated with this pair admits a factorization $xy_1y_2\cdots$ with $\varphi(x)=s$ and $\varphi(y_i)=e$. Now this factorization admits a superfactorization which is associated with a linked pair of $S''$. The claim now follows.
\end{proof}

\subsection{Full Proof of Theorem \ref{thm:lim_trace_char}}

\begin{theorem*}
Let $T\in\rec(\mStar)$ and let $K=\Gamma^{-1}(T)$. The following are equivalent:
\begin{enumerate}[(a)]
\item $K$, and therefore $T$, is limit-stable.\label{thm:lim_trace:1}
\item For all sequences $(t_i)=t_0\pprefix t_1\pprefix t_2\cdots \subseteq T$ and all sequences $(u_i)_i$ with $u_i\in\Gamma^{-1}(t_i)$, there exists a subsequence $(u_{j_i})_i$ and a sequence $(v_{j_i})_i$ of proper prefixes $v_{j_i}\pprefix u_{j_i}$ with $|v_{j_i}|<|v_{j_{i+1}}|$ and $v_{j_i}\in K$ for all $i\in\naturals$.\label{thm:lim_trace:2}
\item For any $\theta \in \lim(T)$ there exists a strictly monotone $(n_i)_i$ such that any inifnite path $\rho$ in $G_\theta$ visits $T$ in each segement $\rho(n_i,n_{i+1}-1)$.\label{thm:lim_trace:3}
\item Let $(t_i)_{i}$ be a sequence of traces in $T$. Then there exists a subsequence $(t_{m_i})_i$, such that $(t_{m_i},t_{m_{i+1}})$ is $T$-separable for every $i$.\label{thm:lim_trace:4}
\item If $T$ and $\lim(T)$ are simultaneoulsy recognized by a morphism $\mapping{\varphi}{\mStar}{S}$ for some finite semigroup $S$, then every linked pair $(s,e)$ has the $\varphi(T)$-cut property. \label{thm:lim_trace:5}  
\item Any DFA $\aut{A}$ recognizing $K$ is $F,I$-cycle closed.\label{thm:lim_trace:6}  
\end{enumerate}
\end{theorem*}
\begin{proof}
 
 (\ref{thm:lim_trace:1})$\implies$(\ref{thm:lim_trace:3}): Let $\theta\in\lim(T)$. If for every $n\in\naturals$ there exists a run $\rho_n$ through $G_{\theta}$ that visits a trace $t\in T$ only after $n$ positions, then there exists a run through $G_{\theta}$ which never visits a trace in $T$. This is because $(\rho_n)_n$ admits a converging subsequence (the space is compact) and because the set $[G_{\theta}]$ of all paths is closed and so this limit must itself be a path through $G_{\theta}$. This contradicts (\ref{thm:lim_trace:1}). Hence there exists $n_0$, such that every path through $G_{\theta}$ visits $T$ after at most $n_0$ steps. We now consider all finite segements of length $n_0$ and extend them. Let $U$ be the set of all those segements. Let $u\in U$. By a similar argument as before, there exists a number $n_u$, such that every extension $v=ux$ of length $n_1$ has visited $T$ at least once after $u$. Since there are finitely many segements in $U$, we can take the maxmimum $n_1=\max_{u\in U} n_u$. In this way we construct $(n_i)_i$. 

 (\ref{thm:lim_trace:3})$\implies$(\ref{thm:lim_trace:4}): Given $(t_i)_i\subseteq T$ we let $\theta=\bigsqcup_i t_i$ and pick $(n_i)_i$ as in (\ref{thm:lim_trace:3}). Now we pick $m_0$ arbitrary. Then, given $m_i$, we pick $m_{i+1}$, such that $|t_{m_{i+1}}|>\min \set{n_{j+1}\mid |t_{m_i}| < n_j}$. Now consider $(t_{m_i},t_{m_{i+1}})$. Because there exists  $n_j$ with $|t_{m_i}|<n_j < n_{j+1} < |t_{m_{i+1}}|$ we have that every path from $t_{m_i}$ to $t_{m_{i+1}}$ visits $T$ at least once. Hence $(t_{m_i},t_{m_{i+1}})$ is $T$-separable.

(\ref{thm:lim_trace:4})$\implies$(\ref{thm:lim_trace:5}): Let $\varphi$ and $S$ be as in the statement. Let $(s,e)$ be a linked pair. If $\lim(T)\cap \varphi^{-1}(s)\odot (\varphi^{-1}(e))^\omega= \emptyset$, then for every factorization $\varphi(a_1\odot\cdots \odot a_k)=e$ with $a_i\in\Sigma$ and every $i$ we have $e\varphi(a_1\odot\cdots\odot a_i)\notin s^{-1}P$. Indeed, if for some $\varphi(a_1\odot\cdots\odot a_k)=e$ we have $e\varphi(a_1\odot\cdots\odot a_i)\in s^{-1}P$, then the trace $x(a_1\odot\cdots\odot a_k)^n (a_1\odot\cdots\odot a_i)\in T$ for every $x\in\varphi^{-1}(s)$ and $n\in\naturals$. This contradicts the premise. 

Now if $\lim(T)\supseteq \varphi^{-1}(s)\odot(\varphi^{-1}(e))^\omega$ we pick an arbitrary factorization $\varphi(a_1\odot \cdots \odot a_k)=e$ and consider the sequence $(t_i)_{i}$ of traces given by $t_0=x\in\varphi^{-1}(s)$ and $t_{i+1}=t_i\odot a_1\odot\cdots \odot a_k$. Then by (\ref{thm:lim_trace:4}) there exists a subsequence $(t_{a_i})_{i}$, such that the pair $(t_{a_i},t_{a_{i+1}})$ is stable. Since $\varphi(t_{a_i})=s=se$ for all $i$, this implies that $x (a_1\odot\cdots \odot a_k)(a_1\odot\cdots \odot a_k)^{r}\odot a_1\odot\cdots \odot a_j\in T$ for some $1\leq j\leq k$ and $0\leq r < a_{i+1}-a_i$. Hence, $ \varphi((a_1\odot\cdots \odot a_k)^{r+1}\odot a_1\odot\cdots \odot a_j)=e\varphi(a_1\odot\cdots \odot a_j)\in s^{-1}P$.

(\ref{thm:lim_trace:5})$\implies$(\ref{thm:lim_trace:1}):  By lemma \ref{lem:exists_satur_and_rec_semigrp}, we may pick a finite semigroup $S$, a subset $P$ of $S$ and a morphism $\varphi$ from $\mStar$ onto $S$ which recognizes $T$ and saturates $\lim(T)$. By (\ref{thm:lim_trace:5}) every linked pair has the $P$-cut property. Let $\alpha\in\Gamma^{-1}(\theta)$  for some $\theta\in\lim(T)$. We may factorize $\theta=\alpha(0)\odot \alpha(1)\odot \cdots$. Let $(s,e)$ be a linked pair associated with a superfactorzation of this factorization and denote the corresponding factorization of $\alpha$ by $\alpha=u v_0 v_1 v_2\cdots$. Let $v_i=v_{i1}\cdots v_{ik_i}$ with $v_{ij}\in\Sigma$. Then, because $\varphi(\Gamma(v_i))=e$ and because $(s,e)$ has the $P$-cut property, the factorization $e=\varphi(v_{i1}\odot\cdots \odot v_{ik_i})$ satisfies $e\varphi((v_{i1}\odot\cdots\odot v_{ij})\in s^{-1}P$ for some $j$. Hence $\varphi(\Gamma(u)\odot\Gamma(v_0)\odot\cdots\odot \Gamma(v_{r-1})\odot v_{r1}\odot\cdots\odot v_{rj})=s\cdot e\cdot \varphi(v_{r1}\odot\cdots\odot v_{rj})\in P$. Hence $\alpha$ has infinitely many prefixes in $T$, so $\alpha\in\lim(L)$.
\end{proof}

\subsection{Proof of Corollary \ref{cor:I-DBA-decidable}}
\begin{corollary*}
Let $K=\Gamma^{-1}(T)$ for some $T \in \rec(\mStar)$. Given $\aut{A}_K$, it is decidable in time $\bigO(|Q|^2\cdot |\Sigma|(|\Sigma|+\log|Q|))$ whether or not $K$ is limit-stable.
\end{corollary*}
\begin{proof}
Let $\aut{A}_K=(Q,\Sigma,q_0,\delta,F)$. Write $\aut{A}_{q,q'}=(Q,\Sigma,q,\delta,\set{q'})$ and $\aut{A}_q=(Q,\Sigma,q,\delta,F)$. Denote by $L_E(\aut{A}_K)$ the language recognized by $\aut{A}_K$ as an $E$-automaton (reachability condition). Note that $F,I$-cycle closure is equivalent to the following property: For every state $q\in Q$ the language $K_q=L(\aut{A}_{q,q})\cap L_E(\aut{A}_q)$ is trace-closed. 

Since $K_q$ is regular and a DFA for $K_q$ can be constructed from $\aut{A}_K$ in $\bigO(|Q|\cdot |\Sigma|)$ (take $Q\times\set{0,1}$ as states and memorize reaching $F$ in the second component), we can obtain the minimal DFA for $K_q$ from $\aut{A}_K$ in time $\bigO(|Q|\cdot |\Sigma|+|Q|\cdot|\Sigma|\cdot\log|Q|)=\bigO(|Q|\cdot|\Sigma|\cdot\log|Q|)$ using Hopcroft's algorithm. We then have to check if this automaton is $I$-diamond. This requires time $\bigO(|Q|\cdot |\Sigma|^2)$. So we have time $\bigO(|Q|\cdot |\Sigma|(|\Sigma|+\log|Q|))$ for every $q\in Q$.
\end{proof}

\end{document}